\begin{document}

\title{Minimal Logarithmic Signatures for one type of Classical Groups
}
\subtitle{MLSs for one type of Classical Groups}


\author{  Haibo Hong  \and  Licheng Wang  \and Haseeb Ahmad \and Yixian Yang
}


\institute{Information Security Center, State Key Laboratory of
Networking and Switching Technology, Beijing University of Posts and
Telecommunications, Beijing, 100876 P.R. China  \email{  honghhaibo1985@163.com \quad wanglc@bupt.edu.cn}           
}

\date{Received: date / Accepted: date}

\maketitle

\begin{abstract}
As a special type of factorization of finite groups, logarithmic signature (LS) is used as the main component of cryptographic keys for secret key cryptosystems such as PGM and public key cryptosystems like $MST_1$, $MST_2$ and $MST_3$. An LS with the shortest length, called a minimal logarithmic signature (MLS), is even desirable for cryptographic applications. The MLS conjecture states that every finite simple group has
an MLS. Recently, the conjecture has been shown to be true for general linear groups $GL_n(q)$,
special linear groups $SL_n(q)$, and symplectic groups $Sp_n(q)$
with $q$ a power of primes and for orthogonal groups $O_n(q)$ with $q$ as a power of 2. In
this paper, we present new constructions of minimal logarithmic
signatures for the orthogonal group $O_n(q)$ and $SO_n(q)$ with $q$ as a power of odd primes.
Furthermore, we give constructions of MLSs for a type of classical groups --- projective commutator subgroup $P\Omega_n(q)$.

\keywords{(Minimal) logarithmic signature  \and Orthogonal group \and Projective commutator subgroup \and Stabilizer \and Spreads}
\subclass{MSC 94A60 \and MSC  94A60 \and  MSC  11T71  \and  MSC 14G50   \and  MSC  20G40   \and  MSC  20E28  \and  MSC  20E32  \and  MSC     \and  MSC 20D06   \and  MSC   05E15 \and  MSC 51A40}

\end{abstract}

\section{Introduction}
The security of many public key cryptosystems is based on the hardness assumptions of certain problems over large finite abelian algebraic structures such as cyclic groups, rings and finite fields. Two well-known hard problems are integers factorization problem (IFP) and discrete logarithm problem (DLP). However, these hardness assumptions would be broken if quantum computers become practical. For instance, Shor's quantum algorithms \cite{S97} solve IFP and DLP very efficiently. The security status of currently used cryptosystems, mainly based on IFP and DLP or their variants, becomes even worse due to the known great progress on finding possible solutions for building quantum computers on practical scales.

Therefore, it is imminent to design effective and practical cryptographic schemes that have the potential for resisting quantum algorithm attacks. Actually, several attempts using non-abelian algebraic structures were made and some available cryptographic schemes such as $PGM$, $MST_1$, $MST_2$ and $MST_3$
\cite{BSGM05,MST02,MM92,M02,LMTW09,MST12} were developed during the past decades. In particular, as a natural analogy of the hardness assumption of IFP, the group factorization problem (GFP)\cite{M86,QV94} and its hardness assumption over certain factorization basis, referred as logarithmic signature, play a core role in the security arguments for the family of $MST$ cryptosystems.

Security is not the unique goal of designing a cryptosytem. Instead, efficiency is also a major issue. With the purpose for minimizing the parameter sizes, a natural question comes to mind: How to make the factorization basis known as logarithmic signature (LS), as short as possible in the family of MST cryptosystems? A minimal logarithmic signature (MLS) is an LS with the shortest length. In other word, further shorten an MLS would make it no longer an LS. New question arises: Does any finite (non-abelian) group has MLSs?

In fact, some encouraging work has been done in searching the MLSs for finite groups.
According to the pioneering work due to Vasco et al. \cite{GRS03}, Holmes \cite{H04} and Lempken et al. \cite{LT05}, we know that, with few exceptions, MLSs exist for all groups of order $\leq 10^{10}$. Most recently, Nikhil Singhi, Nidhi Singhi, and Magliveras \cite{NNM10,NN11} made another breakthrough: MLSs exist for the groups $GL_n(q)$, $SL_n(q)$, $Sp_n(q)$ with $q$ as a power of a prime
and $O_{n}(q)$ with $q$ as a power of 2. As far as we know, this is the first result not constrained by a specified boundary on group orders. Besides, Nikhil Singhi and Nidhi Singhi \cite{NN11} also pointed out, without any proof, that the MLSs \emph{should also exist} for $O_{n}(q)$ with $q$ as a power of odd primes.

Therefore, in this paper, our main motivation is to present new constructions of minimal logarithmic signatures for the orthogonal group $O_n(q)$, the special orthogonal group $SO_n(q)$, the projective special orthogonal group $PSO_n(q)$, the commutator subgroup  $\Omega_n(q)$ and one type of classical groups \cite{G97} ---  projective commutator subgroup $P\Omega_n(q)$ with $q$ as a power of odd primes. For $O_n(q)$ and $SO_n(q)$, the proposed MLSs have the similar structure $[A, B', G_w]$, where $A=\langle a\rangle$, and $B'=\{hC\mid h\in B\}$, $C \leq B$, $B=\langle b\rangle$, while $G_w=P:Q$ is a semi-direct product of a $p$-group $P$ and a direct product $Q=GL_1(q)\times Y$ (see Table 1).  We employ two canonical homomorphisms $\eta: SO_n(q)\to PSO_n(q)$ and $\theta: \Omega_n(q)\to P\Omega_n(q)$ for proving the existence of MLSs for $PSO_n(q)$ and $P\Omega_n(q)$, respectively.

\begin{table}[htbp]
\begin{center}
\footnotesize
\caption{MLSs for $O_n(q)$ and $SO_n(q)$  }\label{tbl:MLS}
\begin{tabular}{|c|c|c|c|c|}
\hline
                           &  & $B'=\{hC\mid h\in B\}$ with &  \multicolumn{2}{c|}{$G_w=P:Q$ with}  \\
                          & $A=\langle a\rangle$  & $C\leq B=\langle b\rangle$,$|C|=q-1$   &  \multicolumn{2}{c|}{$Q=GL_1(q)\times Y $}  \\
    \cline {2-5}
                           & $a$   &   $b$  &   $|P|$  &     $Y$  \\
    \cline {1-5}

          & $ x_1^{q^m-1}$ for $x_1\in GL_{2m}(q)$ &   $\left(
  \begin{array}{cccc}
   D_1 & 0 & 0 & 0 \\
      0  & 1 & 0 & 0 \\
      0  & 0 & (D_1^t)^{-1} & 0 \\
      0  & 0 & 0 & 1\\
  \end{array}
\right) $  for  $D_1 \in GL_{2m-2}(q)$  & $q^{2m-2}$  &  $O_{2m-2}^{-}(q)$ \\

\cline{2-5}

 $O_n(q)$       & $x_2^{q^{m-1}-1}$ for  $x_2 \in GL_{2m}(q)$ &   $\left(
                                                                   \begin{array}{cc}
                                                                   D_2  & 0  \\
                                                                    0     &   (D_2^t)^{-1}\\
                                                                    \end{array}
                                                                  \right)$ for  $ D_2 \in GL_{2m}(q)$ & $q^{2m-2}$  &  $O_{2m-2}^{+}(q)$ \\

 \cline{2-5}

               &  $x_3^{q^m-1}$ for  $x_3 \in GL_{2m+1}(q)$ &  $ \left(
                                                                  \begin{array}{ccc}
                                                                  D_3   & 0 & 0 \\
                                                                    0 & (D_3^t)^{-1} & 0 \\
                                                                    0 & 0  & 1 \\
                                                                  \end{array}
                                                                \right)$  for  $D_3\in GL_{2m}(q)$ & $q^{2m-1}$ &  $O_{2m-1}(q)$ \\
 \cline{1-5}

 & $ x_1^{*q^m-1}$ for $x_1^*\in O^{-}_{2m}(q)$ &

 $\left( \begin{array}{cccc}
      D_1^* & 0 & 0 & 0 \\
     0 & 1 & 0 & 0 \\
     0 & 0 & D_1^{*t} & 0 \\
     0 & 0 & 0 & 1 \\
   \end{array}
 \right)$  for  $D_1^* \in O_{m-1}(q)$  & $q^{2m-2}$  &  $SO_{2m-2}^{-}(q)$ \\

\cline{2-5}

 $SO_n(q)$       & $x_2^{*q^{m-1}-1}$ for  $x_2^* \in O^{+}_{2m}(q)$ &

$ \left(  \begin{array}{cc}
    D_2^*  &  0\\
          0    &   D_2^{*t} \\
   \end{array}
 \right)$    for  $ D_2^* \in O_{m}(q)$ & $q^{2m-2}$  &  $SO_{2m-2}^{+}(q)$ \\

 \cline{2-5}

               &  $x_3^{*q^m-1}$ for  $x_3^* \in O_{2m+1}(q)$ &

          $\left( \begin{array}{ccc}
              D_3^*  & 0 & 0 \\
              0 & D_3^{*t}  & 0 \\
              0 & 0 & 1 \\
            \end{array}
          \right)$    for  $D_3^* \in O_{m}(q)$ & $q^{2m-1}$ &  $SO_{2m-1}(q)$ \\

\hline
\end{tabular}
\end{center}
\end{table}
\normalsize

The rest of contents are organized as follows: Necessary preliminaries are presented in Section 2; In Section 3, we utilize the
Levi decomposition of parabolic subgroups to construct LSs for the parabolic subgroups of $O_n(q)$ and $SO_n(q)$; In Section 4, analogous to methods in \cite{NN11}, we use the totally isotropic subspaces to prove the existence of MLSs for $O_{2m}^{-}(q)$ and $SO_{2m}^{-}(q)$; In Section 5, we utilize  suitable spread in the $P(V)$ \cite{BKMS08,K82,T81} to accomplish the proof for $O_{2m}^{+}(q)$ and $SO_{2m}^{+}(q)$; In Section 6, we make further efforts to present the constructions of MLSs for $PSO^{\pm}_{2m}(q)$ and $P\Omega^{\pm}_{2m}(q)$; In Section 7, we take account of MLSs for $O_{2m+1}(q)$, $SO_{2m+1}(q)$, $PSO_{2m+1}(q)$, $\Omega_{2m+1}(q)$ and $P\Omega_{2m+1}(q)$.

\section{ Preliminaries }

\subsection{Classical Spreads and Quadratic Spaces in Finite Fields}

Let $K$ be a finite field and $V$ a $n$-dimensional vector space over $K$. For $v_1 \in V$, $\langle
v_1\rangle$ denotes the one-dimensional subspace generated by $v_1$. $P(V)$
denotes the projective space on $V$, which is the set of all one-dimensional subspaces of $V$ \cite{W09}.

Now,  we describe the classical spread \cite{T81,K82,D86,BKMS08,NN11}.
 An \emph{r-partial spread} in $V$ is a set $S = \{W_i\mid 1 \leq i \leq t\}$ of
$r$-dimensional subspaces $W_i$ such that $W_i \cap
W_j = \langle 0 \rangle$ for $i \neq j$. If $\cup_{i=1}^{t} W_i = V$, then $S$ is an \emph{r-spread} in $V$. Besides, when $S$ is an $r$-(partial) spread in $P(V)$, it partitions $P(V )$ into $(r-1)$-dimensional subspaces of $P(V)$.

 Suppose that $V = F_{q^{2m}}$ is a finite field, $\alpha$ is a primitive element of field $F_{q^{2m}}$ and $W = F_{q^{m}}$ is an $m$-subspace of $V$. For every $x \in V$, $W_x = \{wx \mid w \in W\}$,
%
 the set $S=\{W_x \mid x\in V\}$ forms a $m$-spread in $V$ \cite{NN11}. Meanwhile, we have the following remark.

\begin{remark} \cite{NNM10,NN11} \label{remark:Wi}
 Let $W_i=W\alpha^{(q^m-1)i} = \{w\alpha^{(q^m-1)i} | w \in W\}$, $0
\leq i \leq q^m$. Then, the spread $S$ can also be described as
$S = \{W_i | 0 \leq i \leq q^m\}$.
\end{remark}

Let $V$ be an $n$-dimensional vector space over the finite field $K = F_q$ with $q=p^e$ for some prime $p$ and a positive integer $e$. Let $\mathcal {B} = \{e_1,\cdots, e_n\}$ be an ordered basis for $V$. Then a\emph{ bilinear form} over a vector space $V$
is a map $f : V \times V \rightarrow K$
satisfying:
\begin{eqnarray*}
  f(\lambda u + v,w)&=& \lambda f(u,w) + f(v,w),  \\
  f(u,\lambda v +w) &=& \lambda f(u,v)+f(u,w).
\end{eqnarray*}
The radical of $f$, denoted by \emph{rad(f)}, is $V^{\bot}= \{u \in V \mid f(u, v) = 0, \forall v \in V\}$. $f$ is called \emph{non-singular} if $rad( f ) = \langle 0\rangle$.

A map $Q: V\rightarrow K $ is called a \emph{quadratic form}
if it satisfies:
\begin{eqnarray*}
  Q(\lambda u + v) = \lambda^2 Q(u) + \lambda f(u, v) + Q(v),
\end{eqnarray*}
where $f$ is a symmetric bilinear form.The radical of $Q$ is $rad(Q)= \{v \in rad( f ) \mid Q(v) = 0\}$. $Q$ is called \emph{non-singular} if $rad(Q) =\langle 0\rangle$.

 An \emph{isometry} on a quadratic space $(V, Q)$ is a non-singular linear map $g : V
\rightarrow V$ such that $Q(g(v)) = Q(v)$ for all $v \in V$. Two
 quadratic spaces $(V, Q)$ and $(V, Q')$ are said to be \emph{equivalent}, if there is an isometry $g : V \rightarrow V$ \cite{W09,NN11}. Besides, the group of all isometries of an inner-product space $(V,f)$ is denoted by \emph{Isom(V, f)} and that of all isometries of a quadratic space $(V, Q)$ is denoted by \emph{Isom(V, Q)}. When $q$ is a power of odd primes then we have, \emph{Isom(V, f)}= \emph{Isom(V, Q)} \cite{W09,NN11}.

A vector $v \in V$ is said to be \emph{isotropic} if $f(v, v) = 0$,
\emph{singular} if $Q(v) = 0$ and \emph{non-singular} if $Q(v)\neq 0$. A
subspace $W$ of $V$ is called \emph{totally isotropic} if $f(u, v) = 0$ for
all $u$, $v \in W$ and \emph{totally singular} if $Q(v)= 0$ for all $v \in W$.
Besides, a point $\langle v\rangle \in P(V )$ is called a \emph{singular point}
if $v$ is a singular vector, and $\langle v\rangle$ is called an \emph{isotropic point} in $P(V )$ if $v$ is an isotropic vector.

\subsection{Logarithmic Signatures and  Minimal Logarithmic Signatures for Finite Groups}


\begin{definition}[Logarithmic Signature]\cite{LT05}
Let $G$ be a finite group, $A\subseteq G$. Let $\alpha =[A_1, \cdots, A_s]$ be a sequence of ordered subsets $A_i$ of $G$ such that $A_i = [a_{i1}, \cdots, a_{ir_i}]$ with $a_{ij} \in G$~$(1 \leq j \leq r_i)$. Then $\alpha$ is called a logarithmic signature for $G$ (or $A$) if each
$g \in G$ (or $A$) is uniquely represented as a product
 \begin{center}
  $g = a_{1j_1}\cdots a_{sj_s}$
\end{center}
 with $a_{ij_i}\in A_i~(1 \leq i \leq s)$.
\end{definition}

The sequences $A_i$ are called the blocks of $\alpha$, the length of $\alpha$ is defined to be $l(\alpha) =\sum_{i=1}^s
 r_i$. Let $|G| =\prod_{j=1}^{k} p_{j}^{a_j}$~(or $|A| =\prod_{j=1}^{k}p_{j}^{a_j}$) be the prime power
decomposition of $|G|$~(or $|A|$) and $\alpha = [A_1, A_2, \ldots, A_s]$ be an LS for $G$~(or $A$). From \cite{GS02}, we have $l(\alpha) \geq \sum_{j=1}^{k} a_{j} p_j$.

\begin{definition}[Minimal Logarithmic Signature]\cite{LT05}
A logarithmic signature $\alpha$ for a finite group $G$~(or $A$) with $l(\alpha) = \sum_{j=1}^{k} a_{j} p_j$
 is called a minimal logarithmic signature (MLS) for $G$~(or $A$).
\end{definition}

\begin{lemma}\cite{NNM10} \label{le:1}
Let $A, B \leq G$, if $A$ and $B$ satisfy any one of the following two conditions:
\begin{enumerate}
\item[(i)]  $G = A\times B$ is a direct product of $A$ and $B$, $A \cap B = \{1\}$
\item[(ii)] $G = A:B$ is a semi-direct product of $A$ and $B$, $A \cap B = \{1\}$.
\end{enumerate}
Then, [A, B] is an LS for $G$.

\end{lemma}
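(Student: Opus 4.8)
The plan is to recognize that, under either hypothesis, proving $[A,B]$ is an LS amounts to showing that the multiplication map $\mu\colon A\times B\to G$, $(a,b)\mapsto ab$, is a bijection. Indeed, if we take the two blocks of $\alpha=[A,B]$ to be the full ordered enumerations of $A$ and of $B$ (of lengths $|A|$ and $|B|$), then the requirement that each $g\in G$ be \emph{uniquely} represented as a product $g=ab$ with $a\in A$, $b\in B$ is exactly the statement that $\mu$ is a bijection. So the whole lemma reduces to checking surjectivity and injectivity of $\mu$.

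First I would observe that case (i) is subsumed by case (ii): a direct product $A\times B$ is in particular a semi-direct product $A:B$ (with trivial action), and in both situations one has the set equality $G=AB$ together with $A\cap B=\{1\}$. Hence it suffices to argue from these two facts alone; the normality of $A$ (and, in case (i), of $B$) plays no role in the counting argument. Surjectivity of $\mu$ is then immediate from $G=AB$, since by definition of the (semi)direct product every $g\in G$ can be written as $g=ab$ with $a\in A$ and $b\in B$. For injectivity, suppose $a_1b_1=a_2b_2$ with $a_i\in A$, $b_i\in B$; rearranging gives $a_2^{-1}a_1=b_2b_1^{-1}$, where the left-hand side lies in $A$ and the right-hand side lies in $B$, so both lie in $A\cap B=\{1\}$. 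Thus $a_1=a_2$ and $b_1=b_2$, and $\mu$ is injective. Combined with surjectivity, $\mu$ is a bijection, so every $g\in G$ has a unique factorization $g=ab$, and $[A,B]$ is an LS for $G$.

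There is no genuine obstacle in this argument; it is the standard internal-(semi)direct-product decomposition. The only point deserving any care is the purely bookkeeping step that connects the group-theoretic statement ``$\mu$ is a bijection'' to the formal definition of a logarithmic signature, namely recording $A$ and $B$ as ordered blocks and verifying that ``uniquely represented as a product'' coincides precisely with the bijectivity of $\mu$. Once that identification is made explicit, the existence half follows from $G=AB$ and the uniqueness half from $A\cap B=\{1\}$.
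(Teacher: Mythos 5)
Your proposal is correct: reducing the lemma to bijectivity of the multiplication map $\mu\colon A\times B\to G$, getting surjectivity from $G=AB$ and injectivity from $A\cap B=\{1\}$, is exactly the standard internal (semi)direct-product argument, and your observation that case (i) is a special case of case (ii) is also sound. Note that the paper itself gives no proof of this lemma --- it is quoted from the reference \cite{NNM10} --- so there is nothing to diverge from; your argument is the expected one and fills that gap correctly.
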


\begin{lemma} \cite{NNM10,NN11}\label{le:2}
Let $H$ be a normal subgroup of $G$, $A \subseteq G$ and $\eta$ the canonical homomorphism $\eta:G\rightarrow G/H$ such that $a$, $b \in A$, $a \neq b$ imply that $aH \neq bH$. Let $A' = \eta(A)$, and suppose hat $[A_1, A_2,\cdots, A_k]$ is an LS for $A$. Let $B_i = \eta(A_i)
\subseteq G/H$ for $1 \leq i \leq k$. Then, $[B_1, B_2, \cdots, B_k]$
is an LS for $A'$.
\end{lemma}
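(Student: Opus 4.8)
The plan is to verify the two defining requirements of a logarithmic signature for $A'$ directly from the definition: that every element of $A'$ factors through the blocks $B_i$, and that this factorization is unique. The existence half follows at once from the fact that $\eta$ is a homomorphism. Given $g' \in A' = \eta(A)$, I choose $g \in A$ with $\eta(g) = g'$; since $[A_1, \ldots, A_k]$ is an LS for $A$, I may write $g = a_{1j_1} \cdots a_{kj_k}$ with $a_{ij_i} \in A_i$, and applying $\eta$ gives $g' = b_{1j_1} \cdots b_{kj_k}$ with each $b_{ij_i} = \eta(a_{ij_i}) \in \eta(A_i) = B_i$. Hence every element of $A'$ is represented.

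The real content is uniqueness, and the hypothesis to exploit is that $\eta$ is injective on $A$, namely the assumption that $a \neq b$ in $A$ forces $aH \neq bH$. Suppose $g' = b_{1j_1} \cdots b_{kj_k} = b_{1l_1} \cdots b_{kl_k}$ are two representations with factors in $B_i = \eta(A_i)$. For each $i$ I would pick preimages $a_{ij_i}, a_{il_i} \in A_i$ with $\eta(a_{ij_i}) = b_{ij_i}$ and $\eta(a_{il_i}) = b_{il_i}$. Because $[A_1, \ldots, A_k]$ is an LS for $A$, its product map is a bijection onto $A$, so $g_1 := a_{1j_1} \cdots a_{kj_k}$ and $g_2 := a_{1l_1} \cdots a_{kl_k}$ both lie in $A$. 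Then $\eta(g_1) = g' = \eta(g_2)$, and injectivity of $\eta$ on $A$ gives $g_1 = g_2$. Uniqueness of the factorization in $A$ now forces $a_{ij_i} = a_{il_i}$ for every $i$, whence $b_{ij_i} = \eta(a_{ij_i}) = \eta(a_{il_i}) = b_{il_i}$, which is precisely the uniqueness required for $A'$.

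I expect the only delicate point to be the lifting step: the preimages $a_{ij_i}$ are not canonical, since $\eta$ need not be injective on an individual block $A_i$. The argument nevertheless goes through because injectivity is invoked only on the whole of $A$, after which uniqueness of the LS for $A$ pins down the lifts however they were chosen. As a consistency check one can also run the counting: injectivity of $\eta$ on $A$ together with the LS property yields $\prod_i |A_i| = |A| = |A'| \leq \prod_i |B_i| \leq \prod_i |A_i|$, so equality holds throughout and $|B_i| = |A_i|$ for each $i$; thus $\eta$ is automatically injective on every block and $[B_1, \ldots, B_k]$ has exactly the length needed to be an LS for $A'$.
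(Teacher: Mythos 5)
The paper itself gives no proof of this lemma---it is quoted from the cited references \cite{NNM10,NN11}---and your argument is precisely the standard one from those sources: lift the two factorizations of $g'$ to products of block elements, observe that both products lie in $A$, invoke injectivity of $\eta$ on $A$ to identify them, and then let uniqueness of the LS for $A$ pin down the factors; this is correct, and your closing count showing $|B_i|=|A_i|$ (so $\eta$ collapses no block) is a valid bonus. The one point worth stating explicitly is that your step ``the product map is a bijection onto $A$'' uses the intended strong reading of the definition of an LS for a subset $A$ (all products of block elements lie in $A$); under the weaker literal reading of Definition~1, where products are merely required to represent each element of $A$ uniquely but may fall outside $A$, the lemma itself becomes false (e.g.\ $G=\mathbb{Z}_4$, $H=\{0,2\}$, $A=\{0,1\}$, $A_1=[0]$, $A_2=[0,1,2]$), so your proof is exactly as strong as the statement allows.
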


Now, let $V$ be a finite dimensional vector space over $F_q$, $f$ be a bilinear form and $Q$ be a quadratic form. We call $L\subseteq P(V)$ a \emph{Singhi subset} \cite{NN11} if $L$ is one of the following sets \cite{NNM10,NN11}:
\begin{enumerate}
\item[(i)] the set of all isotropic points of $P(V)$ with respect to the bilinear form $f$,
\item[(ii)] the set of all non-isotropic points of $P(V)$ with respect to the bilinear form $f$,
\item[(iii)] the set of all singular points of $P(V)$ with respect to the quadratic form $Q$,
\item[(iv)] the set of all non-singular points of $P(V)$ with respect to the quadratic form $Q$.
\end{enumerate}
(Note that a \emph{Singhi subset} $L$ that meets condition (i) will be used in our later construction.)

\begin{lemma}\cite{NNM10,NN11} \label{le:3}
 Suppose that $G|L$ is a transitive permutation group action such that $G$
is a subgroup of $GL(V)$ and $L \subseteq P(V )$ is a Singhi subset.
 Let $S$ be an $r$-partial spread in $V$, which partitions $L$. Let $W \in S$, $w \in P(W)$ and $G_w$ be
the stabilizer of $w$ in $G$. Suppose there are sets $A$, $B
\subseteq G$ such that
\begin{enumerate}
\item[(i)] $A$ acts sharply transitive on $S$ with respect to $W$ under the action of $G$ on the set of all $r$-dimensional subspaces of
$V$.
\item[(ii)] $B$ acts sharply transitive on $L \cap P(W)$ with respect to
$w$ under the action of $G$ on $P(W)$.
\end{enumerate}
Then, $[A, B, G_w]$ is an LS for $G$.
\end{lemma}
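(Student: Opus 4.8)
The plan is to prove that $[A, B, G_w]$ is an LS by showing the multiplication map $A \times B \times G_w \to G$, $(a,b,h) \mapsto abh$, is a bijection; equivalently, that every $g \in G$ admits a \emph{unique} factorization $g = abh$ with $a \in A$, $b \in B$, $h \in G_w$. The three ingredients I would exploit are: (a) $L$ is $G$-invariant, since $G \leq GL(V)$ preserves the defining form and hence the Singhi subset $L$, so $G$ permutes $L$; (b) $S$ partitions $L$, so every point of $L$ lies in $P(W_i)$ for exactly one $W_i \in S$; and (c) the two sharp-transitivity hypotheses (i) and (ii).

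First I would establish existence. Fix $g \in G$ and track the image of the base point $w$. Since $w \in L \cap P(W)$ and $G$ preserves $L$, the point $g(w)$ again lies in $L$, hence in $P(W')$ for a unique $W' \in S$ by the partition property. By (i) there is a unique $a \in A$ with $a(W) = W'$; applying $a^{-1}$ gives $a^{-1}g(w) \in P(a^{-1}W') = P(W)$, and since $a^{-1} \in G$ preserves $L$ we obtain $a^{-1}g(w) \in L \cap P(W)$. By (ii) there is then a unique $b \in B$ with $b(w) = a^{-1}g(w)$. Setting $h := b^{-1}a^{-1}g$, we find $h(w) = w$, i.e. $h \in G_w$, and therefore $g = abh$.

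Next I would prove uniqueness. Suppose $abh = a'b'h'$ with $a,a' \in A$, $b,b' \in B$, $h,h' \in G_w$. Evaluating both sides at $w$ and using $h(w) = h'(w) = w$ yields $ab(w) = a'b'(w) =: g(w)$. Now $b(w), b'(w) \in P(W)$, so $ab(w) \in P(aW)$ and $a'b'(w) \in P(a'W)$ with $aW, a'W \in S$ by (i). Since $S$ is a partial spread, a single point cannot lie in two distinct members, forcing $aW = a'W$; sharp transitivity of $A$ on $S$ then gives $a = a'$. Cancelling $a$ leaves $b(w) = b'(w)$ in $L \cap P(W)$, and sharp transitivity of $B$ on $L \cap P(W)$ gives $b = b'$; cancelling $a = a'$ and $b = b'$ from $abh = a'b'h'$ finally forces $h = h'$.

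The argument is essentially an orbit–stabilizer chain, so I do not expect a deep obstacle; the points needing care are structural. The main thing to verify is that the spread partition is read at the level of \emph{projective points}, so that $g(w)$ determines its containing member $W'$ unambiguously (this is exactly the partial-spread condition $W_i \cap W_j = \langle 0\rangle$). I would also note that the transitivity of $G$ on $L$ is not invoked directly in the bijection above; rather it is the hypothesis that makes the whole setup coherent, since by orbit–stabilizer it forces $|L| = [G : G_w]$ to match the sizes of the orbits on which $A$ and $B$ act, which is what allows (i) and (ii) to hold simultaneously with $|A| = |S|$ and $|B| = |L \cap P(W)|$. Once the bijection is in hand, the length bookkeeping certifying that $[A, B, G_w]$ is an LS is immediate.
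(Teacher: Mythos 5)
Your proof is correct: the factor--track--cancel argument (follow $w$ through the spread partition, peel off the unique $a\in A$ and $b\in B$, and use sharp transitivity twice for uniqueness) is exactly the standard proof of this lemma, which the paper itself does not reprove but cites from \cite{NNM10,NN11}, where essentially this same argument appears. One small point of hygiene: the $G$-invariance of $L$ follows directly from the hypothesis that $G|L$ is a permutation group action, not from ``$G\leq GL(V)$ preserves the defining form'' (no such form-preservation is assumed), but this does not affect the validity of your argument.
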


\begin{lemma}\cite{GS02,NNM10,NN11}\label{le:4}
If $G$ is a solvable group, then $G$ has an MLS.
\end{lemma}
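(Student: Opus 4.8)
The plan is to prove the statement by induction on $|G|$, exploiting the fact that a finite solvable group is built up from cyclic factors of prime order and that the minimal LS length is additive across a normal quotient. For the base case, when $G$ is trivial or cyclic of prime order $p$, an MLS is immediate: with $t$ a generator, the single block $[1, t, t^2, \ldots, t^{p-1}]$ is an LS of length $p = a_p p$, which already meets the lower bound $\sum_{j} a_j p_j$ recalled after the definition of LS, and is therefore minimal.

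For the inductive step I would first locate a normal subgroup of prime index. Since $G$ is nontrivial and solvable, its abelianization (the quotient of $G$ by its commutator subgroup) is a nontrivial finite abelian group, and hence admits a quotient that is cyclic of order $p$ for some prime $p \mid |G|$; pulling this back along $G \to G^{\mathrm{ab}}$ yields a normal subgroup $N \triangleleft G$ of index $p$, so that $G/N$ is cyclic of order $p$. The subgroup $N$ is again solvable and has strictly smaller order, so by the inductive hypothesis it possesses an MLS $\alpha = [A_1, \ldots, A_r]$ with $l(\alpha) = \sum_{q} b_q q$, where $|N| = \prod_q q^{b_q}$.

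The core step is then to glue an MLS of $N$ to one of $G/N$. Picking $t \in G$ with $tN$ generating $G/N$, I set $T = [1, t, \ldots, t^{p-1}]$, a transversal for $N$ in $G$, and claim that $\beta = [A_1, \ldots, A_r, T]$ is an MLS for $G$. For the LS property, each $g \in G$ satisfies $gN = t^{j}N$ for a unique $0 \le j < p$ (as $G/N$ is cyclic on $tN$), whence $g t^{-j} \in N$ factors uniquely through $\alpha$; this produces a unique expression $g = (a_{1 j_1} \cdots a_{r j_r}) t^{j}$, so $\beta$ is an LS. For minimality, observe that $|G| = p \cdot |N|$ raises only the exponent of $p$ by one, so the lower bound for $G$ equals $\sum_q b_q q + p = l(\alpha) + p = l(\beta)$; hence $\beta$ attains the bound and is an MLS, completing the induction.

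The one delicate point, which I would write out carefully, is exactly this gluing step: one must verify that the concatenation gives \emph{unique} factorization in $G$, not merely in $G/N$, and this is precisely the transversal argument above — the converse direction to Lemma~\ref{le:2}. I expect this to be the main (though mild) obstacle, because one has to resist the temptation to invoke Lemma~\ref{le:1}: the extension of $N$ by a cyclic group of order $p$ need not split, so $N$ may have no complement in $G$, and the ordered transversal $T$ must play the role of that missing complement. Everything else reduces to bookkeeping on prime exponents, which is additive over $|G| = |N| \cdot |G/N|$ and therefore transports minimality from the factors to $G$.
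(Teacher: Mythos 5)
Your proof is correct; note that the paper itself contains no proof of this lemma at all (it is imported from \cite{GS02,NNM10,NN11}), and your induction via a prime-index normal subgroup is precisely the standard argument from those references --- it is the same as unrolling a composition series of the solvable group, whose factors are cyclic of prime order, into transversal blocks whose lengths sum to exactly the lower bound $\sum_j a_j p_j$. The gluing step you flag is indeed sound: normality of $N$ gives $g t^{-j}\in N$ exactly when $gN=t^jN$, uniqueness of $j$ follows from $tN$ having order $p$ in $G/N$ and uniqueness of the $N$-part from the inductive MLS, while $|G|=p\,|N|$ raises the bound by exactly $p$, which the block $T$ attains.
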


\begin{lemma}\cite{NN11}\label{le:5}
Let $G$ be a finite group and $x \in G$ be an element of order $t$. For $s \in N, s \leq t$, let $S = \{x^i | 0 \leq i < s\}=\{1,x^1,x^2,\cdots,x^{s-1}\}$ be a cyclic set generated by $x$. Then $S$ has an MLS $\beta=[A_1,A_2,\cdots, A_k]$ satisfying the following condition:

\begin{center}
      For any list $[j_i,j_2,\cdots, j_k]$, such that $x^{j_i}\in A_i, 1\leq i\leq k$, $\sum_{i=1}^{k} j_i <s$.
\end{center}

\end{lemma}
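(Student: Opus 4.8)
The plan is to realize $S$ as a mixed-radix (positional) number system built from the prime factorization of $s$, so that the resulting blocks are automatically minimal and the exponents occurring in them telescope to at most $s-1$. First I would write $s = q_1 q_2 \cdots q_k$ as a product of $k$ primes listed with multiplicity, so that, relative to the prime-power decomposition $s = \prod_j p_j^{a_j}$, we have $k = \sum_j a_j$ and the multiset $\{q_1,\dots,q_k\}$ contains each $p_j$ exactly $a_j$ times. Setting the place values $P_1 = 1$ and $P_i = q_1 q_2 \cdots q_{i-1}$ for $2 \le i \le k$, I define the blocks
$$A_i = \bigl\{\, x^{\,d P_i} : 0 \le d < q_i \,\bigr\}, \qquad 1 \le i \le k,$$
so that $A_i$ has exactly $q_i$ elements and every exponent occurring in $A_i$ lies in $[0,s)$. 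Choosing one element $x^{d_i P_i}$ from each block and multiplying yields $x^{n}$ with $n = \sum_{i=1}^k d_i P_i$, where $0 \le d_i < q_i$.

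Second, I would invoke uniqueness of the variable-base representation: the map $(d_1,\dots,d_k) \mapsto \sum_{i=1}^k d_i P_i$ is a bijection from $\prod_{i=1}^k \{0,\dots,q_i-1\}$ onto $\{0,1,\dots,s-1\}$. Since $s \le t$, the powers $x^0,\dots,x^{s-1}$ are genuinely distinct, so each $x^n \in S$ is represented exactly once as a product $a_1 a_2 \cdots a_k$ with $a_i \in A_i$; this is precisely the assertion that $\beta = [A_1,\dots,A_k]$ is a logarithmic signature for $S$. Its length is $\sum_{i=1}^k q_i = \sum_j a_j p_j$, which meets the lower bound $l(\alpha)\ge\sum_j a_j p_j$ recalled from \cite{GS02}, so $\beta$ is in fact an MLS.

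Finally, for the stated condition I would use a telescoping identity. Any list $[j_1,\dots,j_k]$ with $x^{j_i} \in A_i$ forces $j_i = d_i P_i$ for some $0 \le d_i < q_i$, whence, writing $P_{k+1} := q_1 \cdots q_k = s$,
$$\sum_{i=1}^k j_i = \sum_{i=1}^k d_i P_i \;\le\; \sum_{i=1}^k (q_i - 1) P_i = \sum_{i=1}^k \bigl(q_i P_i - P_i\bigr) = \sum_{i=1}^k \bigl(P_{i+1} - P_i\bigr) = P_{k+1} - P_1 = s - 1 < s .$$
The only delicate point is the reading of the hypothesis $x^{j_i} \in A_i$: because $x$ has order $t \ge s$ and every block exponent $d P_i$ satisfies $d P_i < q_i P_i = P_{i+1} \le s \le t$, the least non-negative exponent of each block element is uniquely determined and bounded by $s-1$, so taking these canonical representatives is forced and the displayed bound applies verbatim. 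I expect this bookkeeping — pinning the exponents to their least non-negative residues and verifying the telescoping sum — to be the only real content, since the logarithmic-signature and minimality claims are immediate consequences of the positional-system structure.
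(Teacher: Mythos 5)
Your proof is correct: the mixed-radix blocks $A_i=\{x^{dP_i}:0\le d<q_i\}$ do form a logarithmic signature for $S$ of length $\sum_i q_i=\sum_j a_jp_j$, and the telescoping bound $\sum_i(q_i-1)P_i=s-1$ gives exactly the stated exponent condition. Note that the paper itself offers no proof of this lemma (it is quoted from \cite{NN11}); your construction is precisely the standard one used there, so the approaches coincide.
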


\begin{lemma}\cite{NN11}\label{le:6}
Let $G$ be a finite group and $[A_1, \cdots, A_r]$ be an LS for $G$ such that for each
subset $A_j$, $1 \leq j \leq r$, an MLS exists. Then $G$ has an MLS.
\end{lemma}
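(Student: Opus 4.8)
The plan is to refine the given logarithmic signature by substituting each block with a minimal logarithmic signature of that block, and then to check that the concatenated sequence is simultaneously an LS for $G$ and has length equal to the minimal bound.

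First I would record the elementary counting fact that makes the lengths align. Since $[A_1,\dots,A_r]$ is an LS for $G$, every $g\in G$ is uniquely a product $a_1\cdots a_r$ with $a_j\in A_j$, so the map sending a tuple $(a_1,\dots,a_r)$ to its product is a bijection onto $G$; hence $|G|=\prod_{j=1}^r |A_j|$. Writing $|A_j|=\prod_{p} p^{a_{j,p}}$ over the primes $p$, multiplicativity gives $|G|=\prod_p p^{a_p}$ with $a_p=\sum_{j=1}^r a_{j,p}$. By hypothesis each block $A_j$ admits an MLS $\beta_j=[A_{j,1},\dots,A_{j,k_j}]$, and by the definition of MLS its length is exactly $l(\beta_j)=\sum_p a_{j,p}\,p$.

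Next I would form the concatenated sequence
\[
\gamma=[A_{1,1},\dots,A_{1,k_1},\,A_{2,1},\dots,A_{2,k_2},\,\dots,\,A_{r,1},\dots,A_{r,k_r}]
\]
and argue that $\gamma$ is an LS for $G$. Given $g\in G$, the outer LS supplies unique $a_j\in A_j$ with $g=a_1\cdots a_r$, and each inner MLS supplies a unique factorization $a_j=a_{j,1}\cdots a_{j,k_j}$ with $a_{j,i}\in A_{j,i}$; substituting yields a representation of $g$ through the blocks of $\gamma$. Uniqueness follows by reversing the two steps: any $\gamma$-factorization groups, block by block, into elements $a_j\in A_j$ that are forced by the uniqueness of $\beta_j$, and those $a_j$ are in turn forced by the uniqueness of the outer LS. Finally I would compute
\[
l(\gamma)=\sum_{j=1}^r l(\beta_j)=\sum_{j=1}^r\sum_p a_{j,p}\,p=\sum_p\Big(\sum_{j=1}^r a_{j,p}\Big)p=\sum_p a_p\,p,
\]
which is precisely the lower bound attained by an MLS, so $\gamma$ is an MLS for $G$.

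The only step requiring genuine care is the uniqueness argument for $\gamma$, where one must verify that the two levels of factorization do not interfere — that is, that regrouping a $\gamma$-factorization back into the $a_j$ is unambiguous. Everything else is bookkeeping with the prime factorization, and the key structural input is the multiplicativity $|G|=\prod_j|A_j|$ forced by the bijectivity of the outer LS; this is exactly what makes the sum of the minimal lengths of the blocks coincide with the minimal length of $G$, so the bound $\sum_p a_p\,p$ is hit on the nose.
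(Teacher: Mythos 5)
Your proposal is correct, and it is essentially the canonical argument: note that this paper does not prove the lemma at all (it is quoted from \cite{NN11}), and the proof given there is precisely your refinement-and-concatenation construction, with the same bookkeeping $\sum_{j}\sum_{p} a_{j,p}\,p=\sum_{p}a_p\,p$ resting on $|G|=\prod_j |A_j|$.

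One point deserves to be made explicit, and it is exactly the step you flag as ``requiring genuine care.'' Your regrouping argument asserts that $b_j:=b_{j,1}\cdots b_{j,k_j}$ lies in $A_j$ for \emph{every} choice of $b_{j,i}\in A_{j,i}$, but Definition~1 as literally worded only demands that each element of $A_j$ have a unique representation; it does not by itself forbid ``rogue'' products $b_j\notin A_j$, and if such products existed the concatenated sequence $\gamma$ would fail uniqueness (the rogue product equals some $g'\in G$, which then has a second, standard representation). The assertion is justified under the intended reading of an LS for a subset $A$, namely that the product map $A_{j,1}\times\cdots\times A_{j,k_j}\to A_j$ is a bijection onto $A_j$, equivalently $\prod_i |A_{j,i}|=|A_j|$; this stronger reading is the one in force in \cite{NNM10,NN11}, and it is in any case required for the lower bound $l(\alpha)\geq\sum_j a_j p_j$ quoted from \cite{GS02} to hold (and hence for ``MLS'' to be meaningful). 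Once you state that you are using this reading, the regrouping is immediate, the outer LS pins down the $a_j$, the inner MLSs pin down the $b_{j,i}$, and your proof is complete.
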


\section{Construction I: LSs for Parabolic Subgroups of $O_n(q)$ and $SO_n(q)$}
Throughout this paper, we assume that $q$ is a power of odd primes.  First, we construct the LSs for parabolic subgroups in
$O_{2m+1}(q)$ and $O_{2m}^{\pm}(q)$.

Let $W$ be an isotropic $k$-space of $V= F_{q^{2m}}$, then the stabilizer of $W$ is the maximal parabolic subgroup $P_{max}$ in $O_ {2m}^{\pm}(q)$ of shape $q^{k(k-1)/2+k(2m-2k)}:(GL_k(q)\times O_{2m-2k}^{\pm}(q))$.
%
Specifically, $P_{max}$ has the shape
\begin{center}
$\left(\begin{array}{ccccc}
       1&      0&       0&  \cdots& 0 \\
\vdots& \ddots& \vdots &        &\vdots\\
\ast  &\cdots &  1     & \cdots & 0\\
  0   &   0   &  0     & A      & 0\\
  0   &   0   &  0     & C      & D\\
  \end{array}
\right)$
\end{center}  the normal $p$-subgroup $R$ is a group of shape
\begin{center}
$\left(
  \begin{array}{ccccc}
     1&      0&       0&  \cdots& 0 \\
\vdots& \ddots& \vdots &        &\vdots\\
\ast  &\cdots &    1   & \cdots & 0\\
   0  &    0  &    0   & I_k    & 0\\
   0  &    0  &    0   & C'  & I_{2m-2k}\\
  \end{array}
\right)$
\end{center}
with center of order $q^{k(k-1)/2}$ and the subset $Q$ of matrices
of the shape

\begin{center}
$\left(
  \begin{array}{ccc}
     I_{k}  &     0  &  0 \\
 0      & A      &  0  \\
 0      & 0      &  D  \\
  \end{array}
\right)$

\end{center}
is a subgroup isomorphic to $GL_k(q)\times O_{2m-2k}^{\pm}(q)$.
Moreover, $R \cap Q =\{I_n\}$, $P_{max}=R:Q$, therefore, $P_{max}$ has an LS $[R,Q]$ (see Lemma \ref{le:1}).
Also, $P_{max}$ is of shape $q^{k(k-1)/2+k(2m+1-2k)}:(GL_k(q)\times O_{2m+1-2k}(q))$ in $O_{2m+1}(q)$.

Similarly, for $SO_{2m}^{\pm}(q)$ and $SO_{2m+1}(q)$, $P_{max}'$ is of shape $q^{k(k-1)/2+k(2m-2k)}:(GL_k(q)\times SO_{2m-2k}^{\pm}(q))$ in $SO_{2m}^{\pm}(q)$ and $q^{k(k-1)/2+k(2m+1-2k)}:(GL_k(q)\times SO_{2m+1-2k}(q))$ in $SO_{2m+1}(q)$.

\section{Construction II: MLSs for $O_{2m}^{-}(q)$ and $SO_{2m}^{-}(q)$}

Now, we construct MLSs for $O_{2m}^{-}(q)$ and $SO_{2m}^{-}(q)$. Here, our fundamental tools are Lemma \ref{le:3} and Lemma \ref{le:6}. Choosing suitable quadratic form $Q$ of minus type, we  take advantage of  all isotropic points of $P(V)$ for constructing the corresponding MLSs.

First, we observe $O_{2m}^{-}(q)$. Suppose $q$ is a power of odd primes, $V = F_{q^{2m}}$ is a $2m$-dimensional vector space over
$F_q$ , and $L$ is the set of all isotropic points of $P(V )$ . For $y
\in V$, $\overline{y}$ denotes $y^{q^m}$. $T_s : V \rightarrow V$ is a linear
transformation defined by $T_s(v) = sv$ for a given $s\in V$ and all $v \in V$. Let $\alpha$ be a primitive element of the field $F_{q^{2m}}$ and $x \in GL_{2m}(q)$ be the matrix corresponding to
the linear transformation $T_\alpha$ \cite{NN11}.  We define a
bilinear map $f : V \times V \rightarrow F_q $ by $f (x, y) =
tr_{F_{q^m} /F_q} (x \overline{y}+\overline{x} y) = \sum_{i=0}^{m-1}
(x \overline{y} + \overline{x} y)^{q^i}$
and a quadratic form $Q : V \rightarrow F_q $ by $Q(x) =
tr_{F_{q^m} /F_q} (x \overline{x}) = \sum_{i=0}^{m-1} (x\overline{x}
)^{q^i}$ \cite{NNM10,NN11}. Then, we observe that the number of
non-zero isotropic points in $P(V)$ with respect to the quadratic
space $(V, Q)$ are $(q^m + 1)(q^{m-1} - 1)/(q - 1)$ and the quadratic form $Q$ is of minus type \cite{W09,NN11}. Let $G$ be the
group of all isometries of $(V, Q)$, then $ G \cong O_{2m}^{-}(q)$
and $G$ is a permutation group acting transitively on isotropic
points \cite{W09,NN11}.

Now, we roughly explain the main idea for constructing the MLSs. As
described above, since the number of non-zero isotropic points in
$P(V)$ with respect to the quadratic space $(V, Q)$ are $(q^m +
1)(q^{m-1} - 1)/(q - 1)$, therefore, we need to construct a cyclic group $A$ of order
$q^m + 1$, which must be sharply transitive on a partial spread $S$ and a cyclic
set $B$ of cardinality $(q^{m-1} - 1)/(q - 1)$, which must be sharply
transitive on the projective subspace of $P(V)$. Then, we take
advantage of the Lemma \ref{le:3}, Lemma \ref{le:5} and Lemma \ref{le:6} for proving the existence of MLSs for $O_{2m}^{-}(q)$.

First, we define two special cyclic subgroups of
$O_{2m}^{-}(q)$. Let $a_1= x_1^{q^m-1} \in GL_{2m}(q)$ be the matrix
corresponding to the linear transformation $T_{\alpha^{q^m-1}}$.
Let $W_1'=\{e_1,e_2,\cdots,e_{m-1}\}$ is an $(m-1)$-dimensional totally isotropic subspace of $V$,  $D_1 \in GL(W_1')$ be a generator of the \emph{Singer cyclic subgroup} of $GL(W_1')$ \cite{CD04} .
Then $b_1 \in GL_{2m}(q)$ can be well defined as
follows:

\begin{center}

$b_1$=$\left(
  \begin{array}{cccc}
    D_1& 0 &0 &0           \\
 0& 1 &0  &0 \\
 0& 0 & (D_1^t)^{-1}&0  \\
 0 & 0 & 0 &1    \\
  \end{array}
\right)$

\end{center}
Meanwhile, we get  that $a_1$, $b_1 \in
G$ \cite{BPS09,NN11}. Let $A_1 = \langle a_1 \rangle$, $ B_1 =
\langle b_1\rangle$ be the cyclic subgroups of $G$ generated by
$a_1$ and $b_1$, respectively. Then, $A_1$ is of order $q^m + 1$ and
$B_1$ is of order $q^{m-1}-1$.

Let $C_1 = \langle b_1^{ \frac{q^{m-1}-1}{q-1}} \rangle$ be the
subgroup of order $q-1$ of $B_1$ and $B_1'=\{gC_1| g\in B_1 \}$ be the left coset of $C_1$ in $B_1$. Thus, $ |B_1'| = \frac{q^m-1}{q-1}$. Furthermore, $A_1$ and $B_1'$ are chosen so that $A_1 \cap B_1'
= \{1\}$ and both are cyclic sets. Then, from Lemma \ref{le:5}, it follows that $A_1$ and $B_1'$ have
MLSs.

Now, let $S_1' = \{W_i'\mid 0 \leq i \leq q^m\}$ be  the classical spread as
described in Remark \ref{remark:Wi}. $W_i'$ are $(m-1)$-dimensional
totally isotropic subspaces of $V$ for $0 \leq i \leq q^m$.  Clearly, the partial spread $S_1'$ partitions
the set of all isotropic points of $P(V)$.

Also, we observe that the group $A_1$ is sharply transitive on $S_1'$ with respect to $W_1'$.
Also, it is clear that $B_1$ is isomorphic to the Singer cyclic subgroup of $GL_{m-1}(q)$ and  $B_1'$ is sharply transitive on $P(W_1'$) with respect to $\langle e_1\rangle$, where $e_1\in W_1'$.

Now, we consider $G^*=SO^{-}_{2m}(q)$. Being similar to the case of $O^{-}_{2m}(q)$, let $a^*_1= x_1^{*q^m-1} \in SO^{-}_{2m}(q)$ be the matrix
corresponding to the linear transformation $T_{\alpha^{q^m-1}}$. Let $W_1'=\{e_1,e_2, \cdots, e_{m-1}\}$ is an $(m-1)$-dimensional totally isotropic subspace of  $V$. Let $D^*_1 \in O_{m-1}(q)\leq GL(W_1')$,
 $b^*_1 \in SO^{-}_{2m}(q)$ is presented well defined as follows:

\begin{center}

$b^*_1$= $\left(
           \begin{array}{cccc}
D^*_1& 0 &0 &0        \\
 0& 1 &0  &0               \\
 0& 0 & D_1^{*t}&0  \\
 0 & 0 & 0 &1              \\
           \end{array}
         \right)$

\end{center}
 thus, $A^*_1 = \langle a^*_1 \rangle$, and $B^*_1 =
\langle b^*_1\rangle$ are the cyclic subgroups of $SO^{-}_{2m}(q)$ generated by
$a^*_1$ and $b^*_1$ with order $q^m + 1$ and $q^{m-1}-1$, respectively.

Let $C^*_1 = \langle b_1^{* \frac{q^{m-1}-1}{q-1}} \rangle$ be the
subgroup of order $q-1$ of $B^*_1$ and $B'^*_1=\{gC^*_1| g\in B^*_1 \}$ be the left coset of $C^*_1$ in $B^*_1$. Thus, $ |B'^*_1| = \frac{q^m-1}{q-1}$. Furthermore, the cyclic sets $A^*_1$ and $B'^{*}_1$ are chosen so that $A^*_1 \cap B'^*_1
= \{1\}$. Consequently, from Lemma \ref{le:5}, it follows that $A^*_1$ and $B'^*_1$ have
MLSs.

Also, let $S_1' = \{W_i'\mid 0 \leq i \leq q^m\}$ be the classical spread as
described in Remark \ref{remark:Wi}. $W_i'$ are $(m-1)$-dimensional
totally isotropic subspaces of $V$ for $0 \leq i \leq q^m$. It's  clear that  the partial spread $S_1'$ partitions
the set of all isotropic points of $P(V)$.

 We observe that the group $A^*_1$ is sharply transitive on $S_1'$ with respect to $W_1'$.
Also,  $B^*_1$ is isomorphic to the \emph{Singer cyclic subgroup} of $GL_{m-1}(q)$ and $B'^*_1$ is sharply transitive on $P(W_1'$) with respect to $\langle e_1\rangle$, where $e_1\in W_1'$.  Hence, we have the following lemma from the fact above.

\begin{lemma} \label{le:7} Let $A_1$, $B_1'\subseteq O_{2m}^{-}(q)$ (resp. $A^*_1$, $B'^*_1\subseteq SO_{2m}^{-}(q)$), $S_1'$ be the partial spread, $W_1'$ be the subspace of $V$ and $w_1 = \langle e_1\rangle$. Then,

\begin{enumerate}
  \item [(i)] $A_1$ (resp. $A^*_1$) is a sharply transitive set on $S_1'$ with respect
to $W_1'$.
  \item [(ii)] $B_1'$ (resp. $B'^*_1$) is a sharply transitive set on $P(W_1')$ with
respect to $w_1$.
\end{enumerate}

\end{lemma}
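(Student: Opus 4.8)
The plan is to prove the two parts separately, in each case reducing the assertion to a counting argument: I first show the relevant orbit is transitive with the full expected cardinality, and then match that cardinality against $|A_1|$ (resp.\ $|B_1'|$) to upgrade plain transitivity into \emph{sharp} transitivity, using the fact that a transitive action whose acting set has cardinality equal to the underlying set is automatically regular. Throughout I write $\gamma := \alpha^{q^m-1}$, so that $a_1 = T_\gamma$ is multiplication by $\gamma$ on $V = F_{q^{2m}}$.

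For part (i), I would first record that the multiplicative order of $\gamma$ is $(q^{2m}-1)/\gcd(q^{2m}-1,\,q^m-1) = (q^{2m}-1)/(q^m-1) = q^m+1$, so that $|A_1| = q^m+1$, which is exactly the number of members of $S_1'$. By Remark~\ref{remark:Wi} the members of the spread are the spaces $W_1'\gamma^{\,i}$ for $0\le i\le q^m$; since field multiplication is commutative, $a_1\bigl(W_1'\gamma^{\,i}\bigr) = \gamma\cdot W_1'\gamma^{\,i} = W_1'\gamma^{\,i+1}$, so $a_1$ advances each member of $S_1'$ to the next one and hence permutes the $q^m+1$ members in a single full cycle. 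Because this cycle length equals $|A_1| = |S_1'|$, the evaluation map $a\mapsto a(W_1')$ is a bijection $A_1\to S_1'$, which is precisely sharp transitivity with respect to $W_1'$. (That $a_1\in G=\mathrm{Isom}(V,Q)$ is already recorded above and is consistent with $Q(\gamma v)=\mathrm{tr}_{F_{q^m}/F_q}(\gamma\overline{\gamma}\,v\overline{v})=Q(v)$, since $\gamma\overline{\gamma}=\gamma^{\,1+q^m}=1$.)

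For part (ii), the action of $B_1=\langle b_1\rangle$ on $W_1'$ is by construction the action of the Singer generator $D_1$ of $GL(W_1')\cong GL_{m-1}(q)$, i.e.\ multiplication by a primitive element $\beta$ of $F_{q^{m-1}}$ on $W_1'\cong F_{q^{m-1}}$, which is regular on the $q^{m-1}-1$ nonzero vectors. I would then identify the stabilizer of the projective point $w_1=\langle e_1\rangle$ in $B_1$: since $b_1^{\,j}$ scales $e_1$ by $\beta^{\,j}$, it fixes $\langle e_1\rangle$ iff $\beta^{\,j}\in F_q^{*}$, i.e.\ iff $\tfrac{q^{m-1}-1}{q-1}\mid j$, so the stabilizer is exactly $C_1=\langle b_1^{(q^{m-1}-1)/(q-1)}\rangle$, the scalar subgroup of order $q-1$. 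By orbit--stabilizer the $B_1$-orbit of $w_1$ has size $|B_1|/|C_1| = \tfrac{q^{m-1}-1}{q-1} = |P(W_1')|$, so $B_1$ is transitive on $P(W_1')$. Since $C_1$ acts as scalars, it fixes every projective point, so two cosets $g_1C_1,g_2C_1$ move $w_1$ to the same point only if $g_2^{-1}g_1\in C_1$, i.e.\ only if they coincide; hence $g C_1\mapsto\langle g e_1\rangle$ is a bijection $B_1'\to P(W_1')$, which is the desired sharp transitivity.

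The $SO_{2m}^{-}(q)$ statement is obtained by the identical two arguments. Here $a_1^{*}$ already lies in $SO_{2m}^{-}(q)$ because $\det T_\gamma = N_{F_{q^{2m}}/F_q}(\gamma) = \gamma^{(q^{2m}-1)/(q-1)} = \gamma^{(q^m+1)(1+q+\cdots+q^{m-1})} = 1$, using that $\gamma$ has order $q^m+1$; and $b_1^{*}=\mathrm{diag}(D_1^{*},1,D_1^{*t},1)$ has determinant one because its two nontrivial blocks are mutually inverse-transpose ($D_1^{*}\in O_{m-1}(q)$ gives $D_1^{*t}=(D_1^{*})^{-1}$). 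I expect the only genuinely delicate point to be the special-orthogonal analogue of the stabilizer identification in part (ii): since $D_1^{*}$ is constrained to $O_{m-1}(q)$ rather than allowed to range over the full Singer torus of $GL_{m-1}(q)$, I would need to verify that the chosen $D_1^{*}$ still acts with a single orbit of length $\tfrac{q^{m-1}-1}{q-1}$ on $P(W_1')$ and that its scalar part $C_1^{*}$ has order exactly $q-1$. Once that orbit-length and scalar-order bookkeeping is confirmed, the orbit--stabilizer and cardinality-matching steps go through verbatim, and everything else is routine.
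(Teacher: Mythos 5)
Your argument for part (ii) in the $O_{2m}^{-}(q)$ case is correct, and it actually supplies what the paper omits: the paper ``proves'' this lemma only by assertion (``we observe that\ldots'', ``it is clear that\ldots''), whereas you identify the stabilizer of $w_1$ in the Singer group $B_1$ as the scalar subgroup $C_1$ of order $q-1$ and convert orbit--stabilizer plus the count $|B_1/C_1|=|P(W_1')|$ into sharp transitivity of the coset set $B_1'$. Part (i), however, has a genuine gap. The step ``$a_1$ advances each member of $S_1'$ to the next one and hence permutes the $q^m+1$ members in a single full cycle'' tacitly assumes that the translates $W_1'\gamma^{i}$, $0\le i\le q^m$, are pairwise distinct. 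For odd $q$ --- the standing hypothesis of this paper --- that is false: $q^m+1$ is even, so $\gamma^{(q^m+1)/2}$ is the unique involution of $F_{q^{2m}}^{*}$, namely $-1$, and $W_1'\cdot(-1)=W_1'$ because $W_1'$ is an $F_q$-subspace. Hence the cycle closes after $(q^m+1)/2$ steps; equivalently $a_1^{(q^m+1)/2}=-I$ lies in $A_1$ and fixes \emph{every} subspace of $V$, so the evaluation map $a\mapsto a(W_1')$ is at best two-to-one and can never be a bijection from $A_1$ onto a set of $q^m+1$ subspaces. Your cardinality-matching step therefore collapses exactly where the odd-characteristic hypothesis matters. (This is in fact a defect of the lemma and of the paper's construction itself, which is carried over from the even-characteristic setting of Singhi--Singhi, where $-1=1$ and the translates really are distinct; but a proof has to detect this obstruction rather than assert the full cycle.)

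The second gap is the one you yourself flagged for $SO_{2m}^{-}(q)$ and deferred as ``bookkeeping'': that $D_1^{*}\in O_{m-1}(q)$ still acts on $P(W_1')$ with a single orbit of length $\frac{q^{m-1}-1}{q-1}$ and scalar part of order $q-1$. This cannot be confirmed, because the fact needed is false. Your own argument in part (ii) requires $B_1^{*}=\langle b_1^{*}\rangle$ to have order $q^{m-1}-1$, i.e.\ requires $D_1^{*}$ to be an element of order $q^{m-1}-1$ of $GL(W_1')\cong GL_{m-1}(q)$; any such element generates a Singer cycle, which acts transitively on the nonzero vectors of $W_1'$, so a $D_1^{*}$-invariant quadratic form would have to be constant on $W_1'\setminus\{0\}$, and then $Q(\lambda v)=\lambda^{2}Q(v)$ forces $Q\equiv 0$ for $q>3$. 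So $O_{m-1}(q)$ contains no element of order $q^{m-1}-1$, the group $B_1^{*}$ cannot have the required order, and the $SO$ half of the lemma cannot be proved along these lines at all. In short, the two points you treated as routine or deferrable are precisely the points at which the statement, as constructed in this paper for odd $q$, breaks down.
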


\begin{theorem}\label{th:1} Let $q$ be a power of odd primes. Then, the
orthogonal group $O_{2m}^{-}(q)$ has an MLS.
\end{theorem}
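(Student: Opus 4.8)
The plan is to prove the statement by induction on $m$, using the machinery assembled above: Lemma~\ref{le:3} produces a logarithmic signature whose three blocks are $A_1$, $B_1'$, and the point-stabilizer $G_w$, and Lemma~\ref{le:6} then upgrades this LS to an MLS once each individual block is shown to possess an MLS. Concretely, I would first invoke Lemma~\ref{le:3} with $G \cong O_{2m}^{-}(q)$ acting transitively on the Singhi subset $L$ of isotropic points of $P(V)$, the $(m-1)$-partial spread $S_1'$ (which partitions $L$), the sets $A = A_1$ and $B = B_1'$, the distinguished subspace $W_1'$, and the point $w_1 = \langle e_1\rangle$. Lemma~\ref{le:7} supplies exactly the two sharp-transitivity hypotheses (i) and (ii) required by Lemma~\ref{le:3}, so that $[A_1, B_1', G_w]$ is an LS for $O_{2m}^{-}(q)$; as a consistency check, $|A_1|\cdot|B_1'| = (q^m+1)\cdot\frac{q^{m-1}-1}{q-1}$ is exactly the number of isotropic points, so by orbit--stabilizer the block sizes multiply to $|O_{2m}^{-}(q)|$.

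It then remains to produce an MLS for each of the three blocks. The sets $A_1$ and $B_1'$ are cyclic sets by construction, so Lemma~\ref{le:5} immediately furnishes an MLS for each. The substantive block is the stabilizer $G_w$. I would identify $G_w$ with the maximal parabolic subgroup of $O_{2m}^{-}(q)$ stabilizing the isotropic line $\langle e_1\rangle$; specializing Construction~I of Section~3 to $k=1$ gives $G_w$ the shape $q^{2m-2}:(GL_1(q)\times O_{2m-2}^{-}(q))$, that is $G_w = P:Q$ with $P$ a $p$-group of order $q^{2m-2}$ and $Q = GL_1(q)\times O_{2m-2}^{-}(q)$, matching the corresponding entry of Table~\ref{tbl:MLS}.

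To equip $G_w$ with an MLS I would argue componentwise. The group $P$ is a $p$-group, hence solvable, so it has an MLS by Lemma~\ref{le:4}; likewise $GL_1(q)\cong F_q^{\ast}$ is cyclic and has an MLS by Lemma~\ref{le:4}. The remaining factor $O_{2m-2}^{-}(q) = O_{2(m-1)}^{-}(q)$ is again an orthogonal group of minus type but of rank one lower, so the induction hypothesis endows it with an MLS. Now Lemma~\ref{le:1}(i) shows that $[GL_1(q),\,O_{2m-2}^{-}(q)]$ is an LS for the direct product $Q$, and since each factor has an MLS, Lemma~\ref{le:6} yields an MLS for $Q$; applying Lemma~\ref{le:1}(ii) to the semidirect product $G_w = P:Q$ and then Lemma~\ref{le:6} once more produces an MLS for $G_w$. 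With MLSs now in hand for all three blocks of the LS $[A_1, B_1', G_w]$, a final application of Lemma~\ref{le:6} delivers an MLS for $O_{2m}^{-}(q)$, completing the inductive step.

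For the base of the induction I would take $m=1$, where the totally isotropic subspace $W_1'$ degenerates and the spread construction no longer applies: here $O_2^{-}(q)$ is dihedral of order $2(q+1)$, hence solvable, and Lemma~\ref{le:4} supplies an MLS directly. The main obstacle, and the crux of the argument, is the correct identification of the point-stabilizer $G_w$ with the maximal parabolic of shape $q^{2m-2}:(GL_1(q)\times O_{2m-2}^{-}(q))$ together with the observation that its orthogonal constituent is again of \emph{minus} type and of rank one lower, since this is precisely what makes the induction self-sustaining. The auxiliary sharp-transitivity facts have already been discharged in Lemma~\ref{le:7}, so everything else reduces to a routine assembly via the product lemmas.
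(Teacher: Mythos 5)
Your proposal is correct and follows essentially the same route as the paper's own proof: the base case $m=1$ via solvability of the dihedral group $O_2^{-}(q)$ and Lemma~\ref{le:4}, the LS $[A_1,B_1',G_w]$ from Lemmas~\ref{le:3} and~\ref{le:7}, MLSs for the cyclic blocks from Lemma~\ref{le:5}, the identification of $G_w$ as $q^{2m-2}:(GL_1(q)\times O_{2m-2}^{-}(q))$ with induction on the minus-type factor, and final assembly via Lemma~\ref{le:6}. Your only additions --- the cardinality consistency check and the explicit use of Lemma~\ref{le:1} to chain the MLSs of $P$, $GL_1(q)$, and $O_{2m-2}^{-}(q)$ into one for $G_w$ --- are elaborations of steps the paper leaves implicit, not a different argument.
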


\begin{proof}
In case, when $m=1$, $G=O_2^{-}(q)\cong D_{q+1}$ which is a dihedral group of order $2(q + 1)$.  From Lemma \ref{le:4},
$O_2^{-} (q)$ has an MLS. When $m>1$, let $A = A_1$, $B = B_1'$, $w_1 = \langle e_1\rangle$, $L$ be the set of all isotropic
points of $P(V)$. Then, from Lemma \ref{le:3} and Lemma \ref{le:7}, $[A_1, B_1', G_w]$ is an LS for $G$. The stabilizer $G_w$ is a semi-direct product of a $p$-group of order $q^{2m-2}$ and $GL_1(q)\times O_{2m-2}^{-}(q)$. Now from Lemma \ref{le:4}, $p$-groups and $GL_1(q)$ have MLSs.
Furthermore, by the induction hypothesis, we assume that $O_{2m-2}^{-}(q)$ has an MLS. Thus, $G_w$ has an MLS. Also, from Lemma \ref{le:5}, $A_1$ and $B_1'$ have MLSs. Hence, using Lemma \ref{le:6}, $G$ has an MLS.
\end{proof}

\begin{theorem}\label{th:2} Let $q$ be a power of odd primes. Then, the special orthogonal group $SO_{2m}^{-}(q)$ has an MLS.

\end{theorem}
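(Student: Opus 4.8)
The plan is to mirror the argument for Theorem~\ref{th:1} almost verbatim, now working inside $G^* = SO_{2m}^{-}(q)$ with the starred data $A_1^*$, $B_1'^*$, $S_1'$, $W_1'$ and $w_1 = \langle e_1\rangle$ already set up in the discussion preceding Lemma~\ref{le:7}. First I would dispose of the base case $m=1$: here $SO_2^{-}(q)$ is cyclic (indeed of order $q+1$), hence abelian and so solvable, and Lemma~\ref{le:4} immediately yields an MLS. For $m>1$ I would invoke Lemma~\ref{le:7}, whose part~(i) gives that $A_1^*$ is sharply transitive on the partial spread $S_1'$ with respect to $W_1'$ and whose part~(ii) gives that $B_1'^*$ is sharply transitive on $P(W_1')$ with respect to $w_1$. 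Since $G^*$ acts transitively on the set $L$ of isotropic points of $P(V)$ and $S_1'$ partitions $L$, the hypotheses of Lemma~\ref{le:3} are met, so $[A_1^*,\, B_1'^*,\, G_w^*]$ is an LS for $G^* = SO_{2m}^{-}(q)$.

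Having the LS in hand, the remaining work is to show that each of its three blocks has an MLS, so that Lemma~\ref{le:6} can be applied to conclude that $G^*$ itself has an MLS. For the first two blocks this is immediate from Lemma~\ref{le:5}: both $A_1^*$ and $B_1'^*$ are cyclic sets (being $A_1^* = \langle a_1^*\rangle$ and a coset collection of the cyclic group $B_1^*$ modulo $C_1^*$), with $A_1^* \cap B_1'^* = \{1\}$, so each carries an MLS. For the stabilizer block $G_w^*$ I would use its structure as a semidirect product $P:Q$ of a $p$-group $P$ of order $q^{2m-2}$ with $Q = GL_1(q) \times SO_{2m-2}^{-}(q)$ (the $SO$-analogue of the $O_n(q)$ stabilizer recorded in Section~3 and Table~\ref{tbl:MLS}). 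By Lemma~\ref{le:4} the solvable factors $P$ and $GL_1(q)$ have MLSs, and the induction hypothesis supplies an MLS for $SO_{2m-2}^{-}(q)$; combining these across the semidirect and direct product via Lemma~\ref{le:1} and Lemma~\ref{le:6} gives an MLS for $G_w^*$.

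I expect the main obstacle to be purely technical rather than conceptual, since the skeleton of the proof is identical to Theorem~\ref{th:1}. The one place deserving genuine care is verifying that the starred elements $a_1^*$ and $b_1^*$ actually lie in $SO_{2m}^{-}(q)$ rather than merely in $O_{2m}^{-}(q)$ --- that is, that they have determinant $1$ and the correct spinor behaviour --- and that, with $D_1^*$ now constrained to $O_{m-1}(q) \le GL(W_1')$ instead of a full Singer generator, the block $B_1'^*$ still achieves sharp transitivity on $P(W_1')$. This is exactly what Lemma~\ref{le:7} asserts, so in the formal proof I would simply cite it; the underlying subtlety is that passing from $GL(W_1')$ to its orthogonal subgroup $O_{m-1}(q)$ does not destroy the sharp-transitivity property, because the relevant orbit counting still matches the cardinality $(q^{m-1}-1)/(q-1)$ of $P(W_1')$. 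The induction is well-founded because each step reduces the dimension from $2m$ to $2m-2$ within the minus-type family, terminating at the solvable base case $m=1$.

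\begin{proof}
When $m=1$, $G^* = SO_2^{-}(q)$ is a cyclic group of order $q+1$, hence solvable, so it has an MLS by Lemma~\ref{le:4}. When $m>1$, take $A = A_1^*$, $B = B_1'^*$, $w_1 = \langle e_1\rangle$, and let $L$ be the set of all isotropic points of $P(V)$. By Lemma~\ref{le:3} together with Lemma~\ref{le:7}, $[A_1^*, B_1'^*, G_w^*]$ is an LS for $G^* = SO_{2m}^{-}(q)$. The stabilizer $G_w^*$ is a semidirect product of a $p$-group of order $q^{2m-2}$ with $GL_1(q) \times SO_{2m-2}^{-}(q)$. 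By Lemma~\ref{le:4}, the $p$-group and $GL_1(q)$ have MLSs, and by the induction hypothesis $SO_{2m-2}^{-}(q)$ has an MLS; hence, using Lemma~\ref{le:1} and Lemma~\ref{le:6}, $G_w^*$ has an MLS. Moreover, by Lemma~\ref{le:5}, the cyclic sets $A_1^*$ and $B_1'^*$ have MLSs. Therefore, applying Lemma~\ref{le:6}, $G^* = SO_{2m}^{-}(q)$ has an MLS.
\end{proof}
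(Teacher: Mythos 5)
Your proposal is correct and follows essentially the same route as the paper's own proof: the solvable base case $SO_2^{-}(q)$ handled by Lemma~\ref{le:4}, the LS $[A_1^*, B_1'^*, G_w^*]$ obtained from Lemmas~\ref{le:3} and~\ref{le:7}, MLSs for the blocks via Lemmas~\ref{le:4} and~\ref{le:5} plus induction on the stabilizer's $SO_{2m-2}^{-}(q)$ factor, and the conclusion by Lemma~\ref{le:6}. Your extra remarks (citing Lemma~\ref{le:1} for the stabilizer's product structure and flagging the membership of $a_1^*$, $b_1^*$ in $SO_{2m}^{-}(q)$) only make explicit steps the paper leaves implicit.
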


\begin{proof}
In case, when $m=1$, $G^*=SO_2^{-}(q)$ is a solvable group of order $q + 1$.  Lemma \ref{le:4} implies that $SO_2^{-} (q)$ has an MLS. When $m>1$, let $A = A^*_1$, $B = B*_1'$, $w_1 = \langle e_1\rangle$, $L$ be the set of all isotropic points of $P(V)$. Thus, from Lemma \ref{le:3} and Lemma \ref{le:7}, $[A^*_1, B'^*_1, G^*_w]$ is an LS for $SO_{2m}^{-}(q)$. The stabilizer $G^*_w$ is a semi-direct product of a $p$-group of order $q^{2m-2}$ and $GL_1(q)\times SO_{2m-2}^{-}(q)$. Now, from Lemma \ref{le:4}, $p$-groups and $GL_1(q)$ have MLSs.
Furthermore, by the induction hypothesis, we assume that $SO_{2m-2}^{-}(q)$ has an MLS. Thus, $G^*_w$ has an MLS. Also, from Lemma \ref{le:5}, $A^*_1$ and $B'^*_1$ have MLSs. Hence, using Lemma \ref{le:6}, $SO_{2m}^{-}(q)$  has an MLS.

\end{proof}

\section{Construction III: MLSs for $O_{2m}^{+}(q)$ and $SO_{2m}^{+}(q)$}

Similarly, we construct MLSs for $O_{2m}^{+}(q)$ and
$SO_{2m}^{+}(q)$. Also,  $L$ is the set of all
isotropic points of $P(V)$.

We at first consider $O_{2m}^{+}(q)$. Let $V = F_{q^{2m}}$, $\alpha$ a primitive element of field $F_{q^{2m}}$ and $q$ be a power of odd primes. 
The corresponding bilinear map and quadratic form are  described as $f(x, y) = f(x_1 + x_2\beta, y_1 + y_2\beta) = tr_ {F_{q^m} /F_q} (x_1 y_2 + x_2 y_1) = \sum _{i=0}^{m-1} (x_1y_2 + x_2 y_1)^{q^i}$ and $Q(x) = Q(x_1 + x_2\beta) = tr_{ F_{q^m} /F_q} (x_1x_2) = \sum _{i=0}^{m-1} (x_1x_2)^{q^i}$,  respectively, where $\beta = \alpha^{q^m-1}$ \cite{NN11}. We observe that the number of non-zero isotropic points in $P(V)$ with respect to the quadratic space $(V, Q)$ are $(q^m - 1)(q^{m-1} + 1)/(q - 1)$ and the quadratic form $Q$ is of plus type \cite{W09,NN11}. Let $G$ be the
group of all isometries of $(V, Q)$, then, $ G \cong O_{2m}^{+}(q)$
and $G$ is a permutation group acting transitively on isotropic
points \cite{W09,NN11}.

 Similarly, we must have to construct a cyclic group $A$ of order $q^{m-1}+1$ which is sharply transitive on a partial spread $S$ and a cyclic set $B$ of cardinality $(q^{m}-1)/(q-1)$ which is sharply transitive on $ P(W)$, the projective subspace of
$P(V)$. Furthermore, we also need to use the Lemma \ref{le:3}, Lemma \ref{le:5} and Lemma \ref{le:6} to prove the
existence of MLSs.

First, we define two special cyclic subgroups of $O_{2m}^{+}(q)$. Let $a_2= x_2^{q^{m-1}-1} \in GL_{2m}(q)$ be the matrix corresponding to the linear transformation $T_{\alpha^{q^{m-1}-1}}$, where
$x_2$=$\left(\begin{array}{ccc}
          x & 0  & 0        \\
        0 & 1  & 0 \\
        0 & 0  & 1    \\
        \end{array}
      \right)$, $x\in GL_{2m-2}(q)$ .
Let $W_2=\{e_1, e_2,\cdots, e_m\}$ be an $m$-dimensional totally isotropic subspace of $V$, $D_2
\in GL(W_2)$ be a generator of the \emph{Singer cyclic subgroup} of
$GL(W_2)$. Then,  $b_2 \in GL_{2m}(q)$ is defined as
follows: \cite{BPS09,NN11}

\begin{center}

$b_2$= $\left(
        \begin{array}{cc}
         D_2  & 0          \\
 0  &  (D_2^t)^{-1}   \\
        \end{array}
      \right)$

\end{center}

Meanwhile, we have that $a_2$, $b_2 \in G$ \cite{BPS09,NN11}.  Let $A_2 = \langle a_2 \rangle$ and $ B_2 =
\langle b_2\rangle$ be the cyclic subgroups of $G$ generated by
$a_2$ and $b_2$, respectively. Then, $A_2$ is of order $q^{m-1} + 1$ and
$B_2$ is of order $q^m-1$.

Let $C_2 = \langle b_2^{ \frac{q^m-1}{q-1}} \rangle$ be the
subgroup of order $q-1$ of $B_2$ and $B_2'=\{gC_2| g\in B_2 \}$ be the left coset of $C_2$ in $B_2$. Thus, $|B_2'| = \frac{q^m-1}{q-1}$. Furthermore, the  cyclic sets$A_2$ and $B_2'$ are chosen so that $A_2 \cap B_2'
= \{1\}$. Then from Lemma \ref{le:5}, it follows that $A_2$ and $B_2'$ have
MLSs.

Now, let $S_2 = \{W_i\mid 0 \leq i \leq q^m\}$ be the classical spread as
described in Remark \ref{remark:Wi}. $W_i$ are m-dimensional
totally isotropic subspaces of $V$ for $0 \leq i \leq q^m$. Then,  the partial spread $S_2$ clearly partitions
the set of all isotropic points of $P(V)$.

Also, we observe that the group $A_2$ is sharply transitive on $S_2$ with respect to $W_2$.
Also, it is clear that $B_2'$ is sharply transitive on $P(W_2)$ with respect to $\langle e_1\rangle$, where $e_1\in W_2$.

 Now, we take account for $G^*=SO_{2m}^{+}(q)$. Being similar to $O_{2m}^{+}(q)$, $a^*_2= x_2^{*q^{m-1}-1} \in SO^{+}_{2m}(q)$ is also the matrix corresponding to the linear transformation $T_{\alpha^{q^{m-1}-1}}$, where
$x^*_2$=$\left( \begin{array}{ccc}
           x^* & 0  & 0 \\
              0 & 1  & 0 \\
              0 & 0  & 1  \\
          \end{array}
        \right)$,  $x^* \in SO^{+}_{2m-2}(q)$.
Let $W_2=\{e_1, e_2,\cdots, e_m\}$ be an $m$-dimensional totally isotropic subspace of $V$ and $D^*_2
\in GL(W_2)$ be a generator of the \emph{Singer cyclic subgroup} of $GL(W_2)$. Then $b^*_2 \in SO^{+}_{2m}(q)$ is defined as
follows: \cite{NN11}

\begin{center}

$b^*_2$=$\left( \begin{array}{cc}
             D^*_2  & 0      \\
 0  &  D_2^{*t}           \\
          \end{array}
        \right)$

\end{center}
Hence, $A_2 = \langle a_2 \rangle$ and $ B_2 = \langle b_2\rangle$ are  the cyclic subgroups of $G$ with order $q^{m-1} + 1$ and $q^m-1$, respectively.

Let $C^*_2 = \langle b_2^{* \frac{q^m-1}{q-1}} \rangle$ be the
subgroup of order $q-1$ of $B^*_2$ and $B^{'*}_2=\{gC^*_2| g\in B^*_2 \}$ be the left coset of $C^*_2$ in $B^*_2$. Thus, $|B'^*_2| = \frac{q^m-1}{q-1}$. Furthermore, the cyclic sets$A^*_2$ and $B'^*_2$ are chosen so that $A^*_2 \cap B'^*_2
= \{1\}$. Then from Lemma \ref{le:5}, it follows that $A^*_2$ and $B'^*_2$ have
MLSs.

Now, let $S_2 = \{W_i\mid 0 \leq i \leq q^m\}$ the classical spread as
described in Remark \ref{remark:Wi}. $W_i$ are m-dimensional
totally isotropic subspaces of $V$ for $0 \leq i \leq q^m$. Thus,, the partial spread $S_2$ clearly partitions
the set of all isotropic points of $P(V)$.

Consequently, we observe that the group $A^*_2$ is sharply transitive on $S_2$ with respect to $W_2$.
Also, it is clear that $B'^*_2$ is sharply transitive on $P(W_2)$ with respect to $\langle e_1\rangle$, where $e_1\in W_2$. Hence, we have
the following lemma.

\begin{lemma} \label{le:8}Let $A_2$, $B_2'\subseteq O_{2m}^{+}(q)$ (resp. $A^*_2$, $B'^*_2\subseteq
SO_{2m}^{+}(q)$ ), $S_2$ be the partial spread, $W_2$ be the subspace of $V$, and $w_2 = \langle e_1\rangle$. Then,

\begin{enumerate}
  \item [(i)] $A_2$ (resp. $A^*_2$) is a sharply transitive set on $S_2$ with respect
to $W_2$.
  \item [(ii)] $B_2'$ (resp. $B'^*_2$) is a sharply transitive set on $P(W_2)$ with
respect to $w_2$.
\end{enumerate}

\end{lemma}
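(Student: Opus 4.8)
The plan is to transplant the argument established for the minus-type groups in Lemma~\ref{le:7}, replacing the $(m-1)$-dimensional totally isotropic subspaces by the $m$-dimensional members $W_i$ attached to the plus-type form $Q(x_1+x_2\beta)=tr_{F_{q^m}/F_q}(x_1x_2)$. Since $A_2,B_2'$ (resp.\ $A_2^*,B_2'^*$) together with $S_2$, $W_2$ and $w_2$ have already been produced, the work is pure verification and separates cleanly into the spread statement (i) and the projective statement (ii). I would treat the orthogonal case in full and then indicate the formally identical passage to the special orthogonal case.

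For part (i) I would first fix the cardinality of the relevant partial spread. Each totally isotropic member $W_i\in S_2$ is an $m$-space, so $P(W_i)$ carries exactly $(q^m-1)/(q-1)$ isotropic points; dividing the total number $(q^m-1)(q^{m-1}+1)/(q-1)$ of isotropic points of $P(V)$ by this quantity shows that precisely $q^{m-1}+1$ members of $S_2$ occur in the partition of $L$, and this matches $|A_2|=|A_2^*|=q^{m-1}+1$. I would then check, using the field-multiplication description of $S_2$ from Remark~\ref{remark:Wi}, that $a_2$ (acting as $T_{\alpha^{q^{m-1}-1}}$) carries each such totally isotropic $W_i$ to another member of $S_2$, so that $\langle a_2\rangle$ permutes these $q^{m-1}+1$ subspaces. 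Because the number of members equals $|A_2|$, transitivity of $A_2$ forces the action to be regular, hence sharply transitive with respect to $W_2$; the same count applied to $a_2^*$ gives the starred claim.

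For part (ii) I would use that $B_2=\langle b_2\rangle$ is, by construction, isomorphic to the Singer cyclic subgroup of $GL(W_2)\cong GL_m(q)$, which acts sharply transitively on the $q^m-1$ nonzero vectors of $W_2$. The subgroup $C_2=\langle b_2^{(q^m-1)/(q-1)}\rangle$ of order $q-1$ is exactly the group of scalars $F_q^{\times}$ acting on $W_2$, so two nonzero vectors lie in a common coset of $C_2$ precisely when they span the same point of $P(W_2)$. Passing to the coset set $B_2'=\{gC_2\mid g\in B_2\}$ then yields a sharply transitive action on the $(q^m-1)/(q-1)$ points of $P(W_2)$ with respect to $w_2=\langle e_1\rangle$, and the same quotient-by-scalars reasoning delivers $B_2'^*$ on $P(W_2)$.

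The step I expect to be the main obstacle is the special orthogonal case, where the two requirements on the generators pull in opposite directions: $b_2^*$ (and $a_2^*$) must simultaneously be an isometry of the plus-type space $(V,Q)$ of determinant $1$, so that $A_2^*,B_2'^*\subseteq SO_{2m}^+(q)$, and must still induce a regular, Singer-type action on $P(W_2)$. The determinant-one condition is what dictates the transpose block $D_2^{*t}$ in place of the inverse-transpose $(D_2^t)^{-1}$ used for $O_{2m}^+(q)$, together with the choice $x_2^*\in SO_{2m-2}^+(q)$; I would therefore spend the bulk of the argument verifying that these prescribed block shapes do define elements of $SO_{2m}^+(q)$ while leaving the induced actions on $S_2$ and on $P(W_2)$ unchanged modulo scalars, since everything else reduces to the counting and Singer-cycle facts already used for $O_{2m}^+(q)$.
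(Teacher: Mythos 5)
Your part (ii) is sound --- the Singer-cycle-modulo-scalars argument is exactly right, and it is essentially all the paper itself relies on (the paper gives no proof of this lemma at all; both items are presented as ``observations'' following the construction, with a citation to Singhi--Singhi, so the only question is whether your verification plan closes on its own). The genuine problems are in part (i). The decisive gap is that your sharp-transitivity argument is circular: you establish that $A_2$ permutes the totally isotropic members of $S_2$ and that their number equals $|A_2|=q^{m-1}+1$, and then write ``transitivity of $A_2$ forces the action to be regular.'' Transitivity is precisely what has to be proved; a group of order $n$ can act on an $n$-element set with many orbits. Given the equality of cardinalities, transitivity is equivalent to freeness, so the missing and essential statement is that no nontrivial power of $a_2$ stabilizes $W_2$. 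In odd characteristic this is not a formality: in the minus-type Lemma~\ref{le:7}, whose argument you propose to transplant, the element $a_1^{(q^m+1)/2}=T_{\alpha^{(q^{2m}-1)/2}}=T_{-1}=-I$ lies in $A_1$ and fixes \emph{every} spread member, so both $1$ and $-I$ fix $W_1'$ and sharp transitivity fails there; for $a_2$ the corresponding element is $\mathrm{diag}(-I_{2m-2},I_2)$, and excluding it requires an argument about how a totally singular $m$-space meets the two eigenspaces --- nothing of the sort appears in your proposal.

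The second failure is the tool you intend to use. You read $a_2$ as the global field multiplication $T_{\alpha^{q^{m-1}-1}}$ on $V=F_{q^{2m}}$ so that Remark~\ref{remark:Wi} controls its action on $S_2$, as it did for $a_1$. But the paper's $a_2=x_2^{q^{m-1}-1}$ with $x_2=\mathrm{diag}(x,1,1)$, $x\in GL_{2m-2}(q)$, fixes a $2$-dimensional subspace pointwise and has order $q^{m-1}+1$, whereas multiplication by $\alpha^{q^{m-1}-1}$ on all of $V$ fixes no nonzero vector and has order $(q^{2m}-1)/\gcd(q^{2m}-1,\,q^{m-1}-1)$, which is far larger. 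So $A_2$ is not a group of field multiplications: it sits inside an embedded $O^{-}_{2m-2}(q)$ acting on a complement, and its effect on a family of totally singular $m$-spaces cannot be read off from Remark~\ref{remark:Wi}. Worse, for the plus-type form $Q(x_1+x_2\beta)=tr_{F_{q^m}/F_q}(x_1x_2)$ with $q$ odd, only the two members $F_{q^m}$ and $F_{q^m}\beta$ of the classical spread are totally singular (writing $\gamma=c_1+c_2\beta$, one has $Q(w\gamma)=tr(w^2c_1c_2)$, which vanishes for all $w$ iff $c_1c_2=0$), so the family of ``$q^{m-1}+1$ totally isotropic members of $S_2$'' presupposed by your counting step does not exist inside the classical spread at all; a plus-type partial spread must be built by entirely different means. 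Finally, the starred verification you postpone to the end would fail rather than succeed: $\mathrm{diag}(D,D^{t})$ preserves the hyperbolic form only when $D^{2}=I$, so the paper's $b_2^{*}$ of order $q^m-1$ is not an isometry; the useful observation is instead that the unstarred $\mathrm{diag}(D_2,(D_2^{t})^{-1})$ already has determinant $1$ and hence lies in $SO^{+}_{2m}(q)$.
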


\begin{theorem}\label{th:3} Let $q$ be a power of odd primes. Then, the orthogonal group $O_{2m}^{+}(q)$ has an MLS.

\end{theorem}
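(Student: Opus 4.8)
The plan is to mirror the argument already used for Theorem~\ref{th:1}, since $O_{2m}^{+}(q)$ differs from $O_{2m}^{-}(q)$ only in the type of the quadratic form and the resulting count of isotropic points, both of which have already been set up in the preparatory material of this section (the bilinear map $f$, the plus-type quadratic form $Q$, the spread $S_2$, the cyclic sets $A_2$ and $B_2'$, and Lemma~\ref{le:8}). The proof proceeds by induction on $m$. First I would dispose of the base case $m=1$: here $G=O_2^{+}(q)$ is a dihedral group of order $2(q-1)$, hence solvable, so Lemma~\ref{le:4} immediately yields an MLS.

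For the inductive step $m>1$, I would set $A=A_2$, $B=B_2'$, $w_2=\langle e_1\rangle$, and let $L$ be the set of all isotropic points of $P(V)$. By Lemma~\ref{le:8}, $A_2$ acts sharply transitively on the partial spread $S_2$ with respect to $W_2$, and $B_2'$ acts sharply transitively on $P(W_2)$ with respect to $w_2$. Since $G\cong O_{2m}^{+}(q)$ acts transitively on the isotropic points and $S_2$ partitions $L$, the hypotheses of Lemma~\ref{le:3} are met, so $[A_2,B_2',G_w]$ is an LS for $G$, where $G_w$ is the stabilizer of $w_2$. It then remains only to exhibit an MLS for each of the three blocks, after which Lemma~\ref{le:6} assembles them into an MLS for $G$.

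The blocks $A_2$ and $B_2'$ have MLSs by Lemma~\ref{le:5}, as both were arranged to be cyclic sets with $A_2\cap B_2'=\{1\}$. For the stabilizer block, the parabolic structure of Section~3 shows that $G_w$ is a semidirect product of a $p$-group of order $q^{2m-2}$ with $GL_1(q)\times O_{2m-2}^{+}(q)$. The $p$-group and $GL_1(q)$ are solvable and hence carry MLSs by Lemma~\ref{le:4}, while $O_{2m-2}^{+}(q)$ has an MLS by the induction hypothesis; combining these factors gives an MLS for $G_w$. Invoking Lemma~\ref{le:6} then completes the proof.

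The only delicate point, and the step I expect to require the most care, is verifying that the stabilizer genuinely decomposes as $q^{2m-2}:(GL_1(q)\times O_{2m-2}^{+}(q))$ with the orthogonal factor of \emph{plus} type preserved, so that the induction descends within the family $O_{2\bullet}^{+}(q)$ rather than crossing over to the minus type. This hinges on the Levi decomposition of the maximal parabolic from Section~3 (with $k=1$) together with the fact that the spread $S_2$ and the form $Q$ were constructed to be of plus type; once the type is confirmed to be stable under the descent, the rest of the argument is entirely parallel to Theorem~\ref{th:1}.
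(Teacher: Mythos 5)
Your proposal is correct and follows essentially the same route as the paper's own proof: induction on $m$ with the solvable base case $O_2^{+}(q)$, the LS $[A_2,B_2',G_{w_2}]$ obtained from Lemma~\ref{le:8} together with the spread/stabilizer machinery, and the assembly of MLSs for the blocks via Lemmas~\ref{le:4}, \ref{le:5} and \ref{le:6}. In fact you cite Lemma~\ref{le:3} for the LS step where the paper's text (apparently a typo) cites Lemma~\ref{le:5}, so your version is the cleaner reading of the same argument.
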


\begin{proof}
Let $G =O_{2m}^{+}(q)$. In case,when $m=1$, $O_2^{+}(q)$ is of order $2(q-1)$. Then, by using Lemma \ref{le:4}, $O_2^{+}(q)$ has an MLS. When $m>1$, let $A=A_2$, $B=B_2'$,
$w_2 = \langle e_1\rangle$, $L$ be the set of all isotropic points of $P(V)$.
Hence, from Lemma \ref{le:5} and Lemma \ref{le:8}, $[A_2, B_2',
G_{w_2}]$ is an LS for $G$. Besides, the stabilizer $G_{w_2}$ is a semi-direct product of a $p$-group of
order $q^{2m-2}$ and $GL_1(q)\times O_{2m-2}^{+}(q)$. Now from Lemma \ref{le:4}, $p$-groups and $GL_1(q)$
have MLSs. Furthermore, by the induction hypothesis, we assume
that $O_{2m-2}^{+}(q)$ has an MLS,  therefore, $O_{2m}^{+}(q)$ also has an
MLS. Thus, $G_{w_2}$ has an MLS. Also, from Lemma \ref{le:5}, the cyclic sets $A_2$ and $B_2'$ have
MLSs. Therefore, using Lemma \ref{le:6}, $G$ has an MLS.
\end{proof}

\begin{theorem} \label{th:4} Let $q$ be a power of odd primes. Then, special orthogonal group $SO_{2m}^{+}(q)$ has an MLS.

\end{theorem}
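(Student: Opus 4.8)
The plan is to mirror exactly the structure of the proof of Theorem~\ref{th:3}, replacing the orthogonal group and its stabilizer data with the special-orthogonal analogues that have already been set up in the lemmas preceding this statement. The whole argument is an induction on $m$, and all the hard work has in fact been front-loaded into Lemma~\ref{le:8}, so what remains is to assemble the pieces via Lemmas~\ref{le:3}, \ref{le:4}, \ref{le:5} and \ref{le:6}.

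First I would dispose of the base case $m=1$: the group $SO_2^{+}(q)$ has order $q-1$ and is abelian, hence solvable, so Lemma~\ref{le:4} immediately yields an MLS. For the inductive step ($m>1$) I would set $G^{*}=SO_{2m}^{+}(q)$, take $A=A_2^{*}$, $B=B_2'^{*}$, $w_2=\langle e_1\rangle$, and let $L$ be the set of all isotropic points of $P(V)$ — precisely the Singhi subset of type (i). By Lemma~\ref{le:8}, $A_2^{*}$ is sharply transitive on the partial spread $S_2$ with respect to $W_2$, and $B_2'^{*}$ is sharply transitive on $P(W_2)$ with respect to $w_2$; since $S_2$ partitions $L$ and $G^{*}$ acts transitively on $L$, the hypotheses of Lemma~\ref{le:3} are met and $[A_2^{*},B_2'^{*},G_{w_2}^{*}]$ is an LS for $G^{*}$. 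It then remains to show each of the three blocks admits an MLS. The cyclic sets $A_2^{*}$ and $B_2'^{*}$ have MLSs by Lemma~\ref{le:5}. For the stabilizer $G_{w_2}^{*}$, I would invoke the parabolic structure from Section~3: it is a semidirect product of a $p$-group of order $q^{2m-2}$ with $GL_1(q)\times SO_{2m-2}^{+}(q)$; the $p$-group and $GL_1(q)$ are solvable and hence have MLSs by Lemma~\ref{le:4}, while $SO_{2m-2}^{+}(q)$ has an MLS by the induction hypothesis. Finally Lemma~\ref{le:6} — an LS each of whose blocks has an MLS yields an MLS for the whole group — closes the argument.

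The only point demanding genuine care is the verification that the distinguished elements $a_2^{*}$ and $b_2^{*}$ constructed before the lemma actually lie in $SO_{2m}^{+}(q)$ rather than merely in $O_{2m}^{+}(q)$, i.e.\ that they are isometries of $(V,Q)$ of determinant $1$; this is where the odd-characteristic hypothesis and the use of $D_2^{*}\in O_m(q)$ with the block shape $\mathrm{diag}(D_2^{*},D_2^{*t})$ (in place of $\mathrm{diag}(D_2,(D_2^t)^{-1})$) become essential. I expect this determinant/isometry check to be the main obstacle, since the spinor-norm and determinant conditions that cut $SO$ out of $O$ must be compatible with the sharp transitivity already recorded in Lemma~\ref{le:8}; but these facts have been asserted in the construction preceding the theorem, so in the write-up they can be cited rather than reproved. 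Everything else is a faithful transcription of the $O_{2m}^{+}(q)$ proof with starred objects throughout.
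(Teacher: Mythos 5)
Your proposal is correct and follows essentially the same route as the paper: the base case $m=1$ via solvability and Lemma~\ref{le:4}, then induction using Lemma~\ref{le:8} together with Lemma~\ref{le:3} to obtain the LS $[A^*_2, B'^*_2, G^*_{w_2}]$, and Lemmas~\ref{le:4}, \ref{le:5}, \ref{le:6} to equip each block with an MLS (your citation of Lemma~\ref{le:3} for the LS step is in fact the correct reference; the paper's proof cites Lemma~\ref{le:5} there, which is evidently a typo, as Theorems~\ref{th:1} and~\ref{th:2} use Lemma~\ref{le:3} at the same point). Your closing remark about verifying $a^*_2, b^*_2 \in SO^{+}_{2m}(q)$ is handled in the paper exactly as you anticipate --- asserted in the construction preceding Lemma~\ref{le:8} rather than reproved in the theorem.
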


\begin{proof}
 In case, when $n=1$,  $G^*=SO_2^{+}(q)$ is a solvable group of order $q-1$. Then by using Lemma \ref{le:4}, $SO_2^{+}(q)$ has an MLS. When $m>1$, let $A=A^*_2$, $B=B'^*_2$,
$w_2 = \langle e_1\rangle$, $L$ be the set of all isotropic points of $P(V)$. Hence, from Lemma \ref{le:5} and Lemma \ref{le:8}, $[A^*_2, B'^*_2, G^*_{w_2}]$ is an LS for $G$. Besides, the stabilizer $G^*_{w_2}$ is a semi-direct product of a $p$-group of order $q^{2m-2}$ and $GL_1(q)\times SO_{2m-2}^{+}(q)$. Now from Lemma \ref{le:4}, $p$-groups and $GL_1(q)$
have MLSs. Furthermore, by the induction hypothesis, we assume that $SO_{2m-2}^{+}(q)$ has an MLS, therefore,  $G^*_{w_2}$ has an MLS. Also, from Lemma \ref{le:5}, the cyclic sets $A^*_2$ and $B'^*_2$ have MLSs.  Finally,, using Lemma \ref{le:6},  $SO_{2m}^{+}(q)$ has an MLS.

\end{proof}

\section{Construction IV: MLSs for $PSO_{2m}^{\pm}(q)$ and $P\Omega_{2m}^{\pm}(q)$}
In this section, we consider the MLSs for $PSO_{2m}^{\pm}(q)$ and $P\Omega_{2m}^{\pm}(q)$. Being different from $O^{\pm}_{2m}(q)$ and $SO^{\pm}_{2m}(q)$, our technique is based on some canonical homomorphisms.

\begin{theorem} \label{th:5} Let $q$ be a power of odd primes. Then, $PSO^{\pm}_{2m}(q)$ has an MLS.

\end{theorem}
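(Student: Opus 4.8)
The plan is to realize $PSO^{\pm}_{2m}(q)$ as a quotient of $SO^{\pm}_{2m}(q)$ and transport the minimal logarithmic signature produced in Theorems \ref{th:2} and \ref{th:4} through the canonical homomorphism using Lemma \ref{le:2}. Since $q$ is odd and $2m$ is even, the scalar $-I$ lies in $SO^{\pm}_{2m}(q)$, so the center is $Z = \langle -I\rangle$ of order $2$ and $PSO^{\pm}_{2m}(q) = SO^{\pm}_{2m}(q)/Z$; write $\eta$ for the canonical map. The starting point is the LS $[A^{*},B'^{*},G^{*}_{w}]$ for $SO^{\pm}_{2m}(q)$ underlying those theorems. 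Because $-I$ fixes the point $w=\langle e_{1}\rangle$, it lies in the stabilizer block $G^{*}_{w}$; and because the factorization afforded by the LS is unique, $-I$ cannot simultaneously lie in $A^{*}$ or $B'^{*}$ (otherwise $-I$ would have two representations), so $\eta$ is injective on each of the cyclic sets $A^{*}$ and $B'^{*}$. Hence the whole problem reduces to equipping $\eta(G^{*}_{w})=G^{*}_{w}/Z$ with an MLS: choosing a transversal $\tilde G$ of $Z$ in $G^{*}_{w}$, the set $A=A^{*}\,B'^{*}\,\tilde G$ is a transversal of $Z$ in $SO^{\pm}_{2m}(q)$ carrying the LS $[A^{*},B'^{*},\tilde G]$, and Lemma \ref{le:2} then yields an LS $[\eta(A^{*}),\eta(B'^{*}),\eta(G^{*}_{w})]$ for $PSO^{\pm}_{2m}(q)$.

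Next I would dissect $G^{*}_{w}/Z$. By Construction I the stabilizer has the Levi form $G^{*}_{w}=P:Q$ with $P$ a $p$-group of order $q^{2m-2}$ and $Q=GL_{1}(q)\times SO^{\pm}_{2m-2}(q)$. As $-I$ is the global scalar, inside this decomposition it becomes the diagonal involution $(-1,-I_{2m-2})\in Q$; being central of order $2$ it cannot lie in the odd-order group $P$, so $Z\leq Q$ and $G^{*}_{w}/Z\cong P:(Q/Z)$. Since $P$ is solvable it has an MLS by Lemma \ref{le:4}, and $P\cap (Q/Z)=\{1\}$, so by Lemmas \ref{le:1} and \ref{le:6} it suffices to produce an MLS for $Q/Z$.

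The subtle point is that $Z$ is the diagonal involution and therefore is \emph{not} contained in a single direct factor of $Q$; thus $Q/Z$ is a central product $GL_{1}(q)\circ SO^{\pm}_{2m-2}(q)$ rather than a direct product, and one cannot simply concatenate signatures of the two factors. To get around this I would use its internal structure: the image $\bar S$ of $\{1\}\times SO^{\pm}_{2m-2}(q)$ is a normal subgroup isomorphic to $SO^{\pm}_{2m-2}(q)$ (the restricted projection has trivial kernel), the image $\bar C$ of $GL_{1}(q)\times\{1\}$ is cyclic of order $q-1$, and $\bar C\cap\bar S$ is the common central subgroup of order $2$, so that $(Q/Z)/\bar S\cong \bar C/(\bar C\cap\bar S)$ is cyclic of order $(q-1)/2$. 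Taking a cyclic transversal $T\subseteq\bar C$ of $\bar C\cap\bar S$ of size $(q-1)/2$, a short order count together with the fact that $\bar C\cap\bar S$ meets $T$ trivially shows that the product map $\bar S\times T\to Q/Z$ is a bijection, so every element factors uniquely as $\bar s\,t$. Concatenating an MLS of $\bar S\cong SO^{\pm}_{2m-2}(q)$ (available from Theorems \ref{th:2} and \ref{th:4}) with an MLS of the cyclic set $T$ (available from Lemma \ref{le:5}) gives an LS of $Q/Z$ whose length is minimal, because $|Q/Z|=|\bar S|\cdot|T|$ forces the exponents in the prime factorization, and hence the minimal lengths $\sum_{j}a_{j}p_{j}$, to add.

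Assembling the pieces gives the theorem: $Q/Z$ has an MLS, hence so does $G^{*}_{w}/Z\cong P:(Q/Z)$ by Lemmas \ref{le:1}, \ref{le:4} and \ref{le:6}; feeding this into Lemma \ref{le:2} shows $[\eta(A^{*}),\eta(B'^{*}),\eta(G^{*}_{w})]$ is an LS for $PSO^{\pm}_{2m}(q)$, and since $|PSO^{\pm}_{2m}(q)|=|A^{*}|\cdot|B'^{*}|\cdot|G^{*}_{w}/Z|$ with every block carrying an MLS, Lemma \ref{le:6} and the additivity of minimal length deliver an MLS for $PSO^{\pm}_{2m}(q)$; the small cases ($m=1$, where the group is solvable) follow directly from Lemma \ref{le:4}. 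I expect the central-product analysis of $Q/Z$ --- recognizing that the diagonal scalar produces a central rather than a direct product and verifying that the transversal construction keeps the signature \emph{minimal} --- to be the main obstacle, while the plus and minus types run in parallel because the Levi factor of the isotropic-point stabilizer preserves the type.
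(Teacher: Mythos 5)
Your proposal is correct, and at the top level it follows the same route as the paper: push the LS $[A^{*},B'^{*},G^{*}_{w}]$ of $SO^{\pm}_{2m}(q)$ from Theorems~\ref{th:2} and~\ref{th:4} through the canonical homomorphism onto $PSO^{\pm}_{2m}(q)$ via Lemma~\ref{le:2}, equip each image block with an MLS, and finish with Lemma~\ref{le:6}. Where you genuinely differ is in the treatment of the stabilizer image, and there your version is the more rigorous one. The paper simply asserts that $\eta(G^{*}_{w})$ is a semidirect product of a $p$-group with $GL_{1}(q)\times PSO^{\pm}_{2m-2}(q)$ and ``uses the same induction as Theorem~\ref{th:2}''; as you point out, $-I$ sits diagonally as $(-1,-I_{2m-2})$ in the Levi factor $Q=GL_{1}(q)\times SO^{\pm}_{2m-2}(q)$, so $Q/Z$ is the central product $GL_{1}(q)\circ SO^{\pm}_{2m-2}(q)$, which contains a copy of $SO^{\pm}_{2m-2}(q)$ and is in general \emph{not} isomorphic to $GL_{1}(q)\times PSO^{\pm}_{2m-2}(q)$. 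Your transversal argument ($\bar S\cong SO^{\pm}_{2m-2}(q)$ normal, $T\subseteq\bar C$ cyclic of size $(q-1)/2$, unique factorization $\bar s\,t$, then Lemmas~\ref{le:5} and~\ref{le:6}) repairs exactly this gap, and has the bonus that the induction rests on $SO^{\pm}_{2m-2}(q)$, already settled by Theorems~\ref{th:2} and~\ref{th:4}, so no separate induction on the projective groups is needed. You also verify the hypothesis of Lemma~\ref{le:2} (building the transversal $A^{*}B'^{*}\tilde G$ of $Z$ and checking $-I\notin A^{*}\cup B'^{*}$ from uniqueness of the LS factorization), which the paper skips even though Lemma~\ref{le:2} applied naively to the whole group does not satisfy its own injectivity condition. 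The only compressed step is the injectivity of $\eta$ on $B'^{*}$: since $B'^{*}$ is not a subgroup, you should add that $b^{i}Z=b^{j}Z$ with $0\le j<i<|B'^{*}|$ would force $b^{i-j}=-I\in B'^{*}$ (using that $-I$ is an involution), which your uniqueness argument already rules out. In short: same skeleton as the paper, but your execution fills real gaps in the paper's own proof.
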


\begin{proof}

(1)In case, when $G^*=SO^{-}_{2m}(q)$ and $G'= PSO^{-}_{2m}(q)$, let $A = A^*_1$, $B = B'^*_1$, $w_1 = \langle e_1\rangle$, $L$ be the set of all isotropic points of $P(V)$ as described as Section 3. Suppose $\eta_1:SO^{-}_{2m}(q) \rightarrow PSO^{-}_{2m}(q)\cong SO^{-}_{2m}(q)/Z(SO^{-}_{2m}(q)$ is the canonical homomorphism onto $PSO^{-}_{2m}(q)$, and let $\overline{A^*_1}=\eta(A^*_1)$, $\overline{B'^*_1}=\eta(B'^*_1)$ and $\overline{G^*_{w_1}}=\eta(G^*_{w_1})$, then [$\overline{A^*_1}$, $\overline{B'^*_1}$, $\overline{G^*_{w_1}}$] is the corresponding LS for $PSO^{-}_{2m}(q)$ from Lemma \ref{le:2}.
Also from Section 3, the stabilizer $\overline{G^*_{w_1}}$ is a semi-direct product of a $p$-group of order $q^{2m-2}$ and
$GL_1(q)\times PSO_{2m-2}^{-}(q)$. Thus, using the same induction as used in Theorem \ref{th:2}, we get that $PSO^{-}_{2m}(q)$ has an MLS.

(2)In case, when $G^*=SO^{+}_{2m}(q)$ and $G'= PSO^{+}_{2m}(q)$, let $A = A^*_2$, $B = B'^*_2$, $w_2 = \langle e_1\rangle$, $L$ be the set of all isotropic points of $P(V)$ as described as Section 4. Suppose $\eta_2:SO^{+}_{2m}(q) \rightarrow PSO^{+}_{2m}(q)\cong SO^{+}_{2m}(q)/Z(SO^{+}_{2m}(q)$ is the canonical homomorphism onto $PSO^{+}_{2m}(q)$, and let $\overline{A^*_2}=\eta(A^*_2)$, $\overline{B'^*_2}=\eta(B'^*_2)$ and $\overline{G^*_{w_2}}=\eta(G^*_{w_2})$, then [$\overline{A^*_2}$, $\overline{B'^*_2}$, $\overline{G^*_{w_2}}$] is the corresponding LS for $PSO^{+}_{2m}(q)$ from Lemma \ref{le:2}.
Hence, the stabilizer $\overline{G^*_{w_2}}$ is a semi-direct product of a $p$-group of order $q^{2m-2}$ and
$GL_1(q)\times PSO_{2m-2}^{+}(q)$. Thus, using the same induction as used in Theorem \ref{th:2}, we get that $PSO^{+}_{2m}(q)$ has an MLS.

\end{proof}

\begin{theorem} \label{th:6} Let $q$ be a power of odd primes. Then, $P\Omega^{\pm}_{2m}(q)$  has an MLS.

\end{theorem}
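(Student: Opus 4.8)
The plan is to mirror the two-layer strategy that produced Theorem~\ref{th:5}: first build an MLS for the commutator subgroup $\Omega^{\pm}_{2m}(q)$ by the same spread-and-stabilizer construction used for $SO^{\pm}_{2m}(q)$ in Theorems~\ref{th:2} and~\ref{th:4}, and then push that MLS down to the quotient $P\Omega^{\pm}_{2m}(q)=\Omega^{\pm}_{2m}(q)/Z(\Omega^{\pm}_{2m}(q))$ through the canonical homomorphism $\theta$ by invoking Lemma~\ref{le:2}. Since $\Omega^{\pm}_{2m}(q)$ is the index-two kernel of the spinor norm inside $SO^{\pm}_{2m}(q)$ and still acts transitively on the set $L$ of isotropic points of $P(V)$ once $m$ is large enough, the hypotheses of Lemma~\ref{le:3} remain available.

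I would first select $A$ and $B$ as in Sections~4 and~5 but now insisting that their generators lie in $\Omega^{\pm}_{2m}(q)$; concretely one replaces the representatives $a^{*}_{i},b^{*}_{i}$ by suitable powers or products whose spinor norm is trivial, so that $A,B\subseteq\Omega^{\pm}_{2m}(q)$ while $A$ remains sharply transitive on the spread $S$ and $B$ sharply transitive on $L\cap P(W)$. Lemma~\ref{le:3} then yields an LS $[A,B,G_w]$ for $\Omega^{\pm}_{2m}(q)$, where the stabilizer $G_w$ is a semi-direct product of a $p$-group of order $q^{2m-2}$ with $GL_1(q)\times\Omega^{\pm}_{2m-2}(q)$. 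By Lemma~\ref{le:4} the $p$-group and $GL_1(q)$ are solvable and hence have MLSs, by Lemma~\ref{le:5} the cyclic sets $A$ and $B$ have MLSs, and by the induction hypothesis $\Omega^{\pm}_{2m-2}(q)$ has an MLS; assembling these through Lemma~\ref{le:6} gives an MLS for $\Omega^{\pm}_{2m}(q)$, with the base case $m=1$ (where $\Omega^{\pm}_{2}(q)$ is cyclic, hence solvable) settled by Lemma~\ref{le:4}.

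With the canonical homomorphism $\theta:\Omega^{\pm}_{2m}(q)\to P\Omega^{\pm}_{2m}(q)$ in hand, I would set $\overline{A}=\theta(A)$, $\overline{B}=\theta(B)$, and $\overline{G_w}=\theta(G_w)$. After checking the separation hypothesis of Lemma~\ref{le:2}---that distinct elements within each block stay distinct modulo $Z(\Omega^{\pm}_{2m}(q))$---that lemma delivers an LS $[\overline{A},\overline{B},\overline{G_w}]$ for $P\Omega^{\pm}_{2m}(q)$. The image $\overline{G_w}$ is again a semi-direct product of a $p$-group of order $q^{2m-2}$ with $GL_1(q)\times P\Omega^{\pm}_{2m-2}(q)$, so by the same induction (with $P\Omega^{\pm}_{2m-2}(q)$ in place of $\Omega^{\pm}_{2m-2}(q)$) together with Lemmas~\ref{le:4} and~\ref{le:5} for the remaining blocks, every block has an MLS; Lemma~\ref{le:6} then finishes the proof exactly as in Theorem~\ref{th:5}, the base $P\Omega^{\pm}_{2}(q)$ being trivial.

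I expect the main obstacle to be the spinor-norm bookkeeping that guarantees $A,B\subseteq\Omega^{\pm}_{2m}(q)$ while preserving sharp transitivity, together with the verification of the Lemma~\ref{le:2} separation condition. Because $Z(\Omega^{\pm}_{2m}(q))$ is contained in $\{\pm I\}$ and therefore has order at most two, the separation check reduces to ruling out that a nontrivial block element coincides with $-I$ modulo the cyclic relations; this is delicate precisely when $-I\in\Omega^{\pm}_{2m}(q)$, i.e.\ for the residue classes of $q^{m}$ that place $-I$ in the commutator subgroup, and those residual cases may require choosing the coset representatives in $B$ more carefully or treating a few small $m$ separately.
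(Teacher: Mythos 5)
Your two-layer strategy (build an MLS for $\Omega^{\pm}_{2m}(q)$, then push it down through $\theta$ via Lemma~\ref{le:2}) is indeed the route the paper takes---but only for \emph{one} of its two cases. The paper first splits on the residue of $q^m$ modulo $4$: when $q^m\equiv -1 \bmod 4$ one has $PSO^{\pm}_{2m}(q)=P\Omega^{\pm}_{2m}(q)$, so Theorem~\ref{th:5} already gives the MLS with no new work; only when $q^m\equiv 1\bmod 4$ does the paper descend to $\Omega^{\pm}_{2m}(q)$ and quotient by the center. You miss this dichotomy, which costs you the cheap half of the argument and forces your construction to work uniformly in exactly the regime where it is most delicate.

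The genuine gap is in your first layer. You propose to get $A,B\subseteq\Omega^{\pm}_{2m}(q)$ by replacing $a^{*}_{i},b^{*}_{i}$ with ``suitable powers or products whose spinor norm is trivial,'' but this cannot work as stated: sharp transitivity of $A$ on the spread $S$ forces $|A|=q^m+1$ exactly, and if $a^{*}_{i}\notin\Omega^{\pm}_{2m}(q)$ then $\langle a^{*}_{i}\rangle\cap\Omega^{\pm}_{2m}(q)$ has order only $(q^m+1)/2$, while passing to a power of $a^{*}_{i}$ can only shrink the cyclic set further; multiplying by correcting factors destroys the cyclic-set structure needed for Lemma~\ref{le:5}. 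So modification is not an option---what is actually required is a proof that the \emph{original} generators already lie in $\Omega^{\pm}_{2m}(q)$. This is precisely what the paper supplies: it invokes the criterion that $x\in SO^{\pm}_{2m}(q)$ belongs to $\Omega^{\pm}_{2m}(q)$ if and only if the rank of $I_{2m}+x$ is even, and checks that the ranks of $I_{2m}+x^{*}_{i}$ and $I_{2m}+b^{*}_{i}$ are even, so that $A^{*}_{i}\leq\Omega^{\pm}_{2m}(q)$ and $B'^{*}_{i}\subseteq\Omega^{\pm}_{2m}(q)$ with no alteration. Without some such verification your induction on $\Omega^{\pm}_{2m}(q)$ never gets started, and you yourself flag this as an unresolved ``obstacle'' rather than closing it. (Your worry about the separation hypothesis of Lemma~\ref{le:2} when $-I\in\Omega^{\pm}_{2m}(q)$ is legitimate, and in fairness the paper glosses over it as well; but a proof must either verify it for the chosen blocks or note that the delicate case $q^m\equiv 1 \bmod 4$ is exactly where $-I$ has odd spinor-norm behavior ruling out the collision.)
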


\begin{proof}

(1)In case, when $q^m\equiv -1 \bmod 4$, from \cite{W09},we have that $PSO^{\pm}_ {2m}(q)=P\Omega^{\pm}_{2m}(q)$.  Therefore,  being similar to the case in $PSO^{\pm}_{2m}(q)$, [$\overline{A^*_1}$, $\overline{B'^*_1}$, $\overline{G^*_{w_1}}$] is the corresponding LS for $P\Omega^{-}_{2m}(q)$ and [$\overline{A^*_2}$, $\overline{B'^*_2}$, $\overline{G^*_{w_2}}$] is the corresponding LS for $P\Omega^{+}_{2m}(q)$.  Meanwhile, as described in Theorm \ref{th:5},  $P\Omega^{-}_{2m}(q)$ and $P\Omega^{+}_{2m}(q)$ both have MLSs.

(2)In case, when $q^m\equiv 1 \bmod 4$,  we must consider the reflection in $G^*=SO_n(q)$.  For  the  isotropic 1-space $ w=\langle e_1\rangle$,  the reflection $r_w: V \rightarrow V$  is defined by $r_{w}(v)=v-2 \frac{f(v,w)}{f(w,w)}w $ for each $v\in V$. Also,  the linear transformation $r_w$  is an element of $G^*_w$.  For$G'=\Omega^{-}_{2m}(q)$  (resp. $\Omega^{+}_{2m}(q)$),  each element of $G'_w$  is a product of an even number of reflections and   $G'_w$  is a semi-direct product of a $p$-group of order $q^{2m-2}$ and $GL_1(q)\times \Omega^{-}_{2m-2}(q)$ (resp. $GL_1(q)\times \Omega^{+}_{2m-2}(q)$). Besides,  an element $x$ in $SO^{\pm}_ {2m}(q)$ is in $\Omega^{\pm}_{2m}(q)$  if and only if the rank of $I_{2m} + x$ is even.  From the construction of  $A^*_1$ (resp. $A^*_2 $)  and  $B'^*_1$ (resp. $B'^*_2 $) in Section 3 and Section 4, we get that  the ranks of $I_{2m} + x^*_1$ (resp. $I_{2m} + x^*_2$) and  $I_{2m} + b^*_1$ (resp. $I_{2m} + b^*_2$ )  are both even, Thus,  $A^*_1$ (resp. $A^*_2 ) \leq G'$, $B'^*_1$ (resp. $B'^*_2 ) \subseteq G'$. As described in Theorem \ref{th:3} and Theorem \ref{th:4}, we have  $\Omega^{-}_{2m}(q)$ (resp. $\Omega^{+}_{2m}(q)$) has an MLS.

Furthermore,, let $\theta_1:  \Omega^{-}_ {2m}(q) \rightarrow P\Omega^{-}_{2m}(q)$ and  $\theta_2:  \Omega^{+}_ {2m}(q) \rightarrow P\Omega^{+}_{2m}(q)$ be  the canonical homomorphisms onto $P\Omega^{-}_{2m}(q)$ and $P\Omega^{+}_{2m}(q)$, respectively.  Therefore,   [$\theta_1(A^*_1)$, $\theta_1(B'^*_1)$, $\theta_1(G^*_{w_1})$] is the corresponding LS for $P\Omega^{-}_{2m}(q)$ and  [$\theta_2(A^*_2)$, $\theta_2(B'^*_2)$, $\theta_2(G^*_{w_2})$] is the corresponding LS for $P\Omega^{+}_{2m}(q)$ . Also,  for   $P\Omega^{-}_{2m}(q)$, the stabilizer $\theta_1(G^*_{w_1})$  is a semi-direct product of a $p$-group of order $q^{2m-2}$ and $GL_1(q)\times P\Omega^{-}_{2m-2}(q) $;  for  $P\Omega^{+}_{2m}(q)$. the stabilizer  $\theta_2(G^*_{w_2})$  is a semi-direct product of a $p$-group of order $q^{2m-2}$ and $GL_1(q)\times P\Omega^{+}_{2m-2}(q) $. Thus, using the same induction as used in Theorem \ref{th:2} and Theorem \ref{th:4}, we get that $P\Omega^{-}_{2m}(q)$ and $P\Omega^{+}_{2m}(q)$ both have MLSs.

\end{proof}

\section{Construction V: MLSs for $O_{2m+1}(q)$, $SO_{2m+1}(q)$, $PSO_{2m+1}(q)$ and $P\Omega_{2m+1}(q)$}

In this section, we first construct MLSs for $O_{2m+1}(q)$ and $SO_{2m+1}(q)$. Then we consider the MLSs for $PSO_{2m+1}(q)$ and $P\Omega_{2m+1}(q)$. Here, $L$ is also the set of all isotropic points of $P(V)$.

Let $V = F_{q^{2m+1}}$ and $q$ be a power of odd primes. 
The corresponding non-singular alternating bilinear map is $f(x, y) = tr_{F_ {q^{2m+1}}/F_q} (ax\overline{y}) = \sum _{i=1}^{2m+1} (ax \overline{y})^{q^i}$, where $a \in F_{q^{2m+1}}^*$ and $a + \overline{a} = 0$ \cite{NN11}. Also,
$G = O_{2m+1}(q)$ is the isometry group of the inner product space $(V, f )$ and $G$ is a permutation group acting transitively on isotropic points \cite{W09,NN11}. Then we observe that the number of non-zero isotropic points in $P(V)$ with respect to the inner product space $(V, f )$ are $(q^m - 1)(q^m + 1)/(q - 1)$ \cite{W09,NN11}.


 Similarly, we construct a cyclic group $A$ of order $q^m+1$, which is sharply transitive on a partial spread $S$ and a cyclic set $B$ of cardinality $q^{m}-1/(q-1)$, which
is sharply transitive on $ P(W)$, the projective subspace of
$P(V)$. Then we use the Lemma \ref{le:3}, Lemma \ref{le:5} and Lemma \ref{le:6} to prove the
existence of MLSs.

Let $a_3= x_3^{q^m-1} \in GL_{2m+1}(q)$ be the matrix corresponding to the linear transformation $T_{\alpha^{q^m-1}}$, where
$x_3$= $\left(
         \begin{array}{cc}
           x & 0      \\
           0 & 1      \\
         \end{array}
       \right)$,  $x\in GL_{2m}(q)$. Let $W_3=\{e_1, e_2,\cdots, e_m\}$ be an $m$-dimensional totally isotropic subspace of $V$, $D_3 \in GL(W_3)$ be a generator of the \emph{Singer cyclic subgroup} of
$GL(W_3)$. Then $b_3 \in GL_{2m+1}(q)$ defined as
follows: \cite{NN11}

\begin{center}

$b_3$=$\left(\begin{array}{ccc}
         D_3 &  0 & 0         \\
     0   &  (D_3^t)^{-1} &  0 \\
       0   &  0  & 1      \\
        \end{array}
      \right)$

\end{center}

From \cite{BPS09,NN11}, we have that $a_3$, $b_3 \in G$. Let $A_3 = \langle a_3 \rangle$, $ B_3 =
\langle b_3\rangle$ be cyclic subgroups of $G$. Thus, $|A_3|=q^m + 1$ and $|B_3|=q^m-1$.

Let $C_3 = \langle b_3^{ \frac{q^m-1}{q-1}} \rangle$ be the
subgroup of order $q-1$ of $B_3$, $B_3'=\{gC_3| g\in B_3 \}$ the left coset of $C_3$ in $B_3$. Therefore, $|B_3'| = \frac{q^m-1}{q-1}$. Furthermore, the cyclic sets $A_3$ and $B_3'$ are chosen so that $A_3 \cap B_3'= \{1\}$. Hence,  from Lemma \ref{le:5}, it follows that $A_3$ and $B_3'$ have
MLSs.

Now, let $S_3 = \{W_i\mid 0 \leq i \leq q^m\}$ be the classical spread as
described in Remark \ref{remark:Wi}, $W_i$ be $m$-dimensional
totally isotropic subspaces of $V$ for $0 \leq i \leq q^m$. Therefore, the partial spread $S_3$ partitions
the set of all isotropic points of $P(V)$.

We observe that the group $A_3$ is sharply transitive on $S_3$ with respect to $W_3$.
Also, $B_3'$ is sharply transitive on $P(W_3)$ with respect to $\langle e_1\rangle$, where $e_1\in W_3$.

In case, when $G^*=SO_{2m+1}(q)$, let $a^*_3= x_3^{* q^m-1} \in SO_{2m+1}(q)$ be the matrix corresponding to the linear transformation $T_{\alpha^{q^m-1}}$, where
$x^*_3$=$\left(
          \begin{array}{cc}
            x^* & 0      \\
                0 & 1      \\
          \end{array}
        \right)$,  $x^* \in SO^{\pm}_{2m}(q)$. Let $W_3=\{e_1, e_2,\cdots, e_m\}$ be an $m$-dimensional totally isotropic subspace of $V$ and $D^*_3 \in O_m(q) \leq GL(W_3)$. Then, $b^*_3 \in SO_{2m+1}(q)$ is defined as
follows: \cite{NN11}

\begin{center}

$b_3$=$\left( \begin{array}{ccc}
         D^*_3 &  0 & 0         \\
 0   &  D_3^{* t}&  0 \\
 0   &  0  & 1  \\
        \end{array}
      \right)$

\end{center}

From \cite{BPS09,NN11}, we have that $a^*_3$, $b^*_3 \in G$. Let $A^*_3 = \langle a^*_3 \rangle$, $ B^*_3 =
\langle b^*_3\rangle$ be cyclic subgroups of $G$. Then, $|A^*_3|=q^m + 1$ and $|B^*_3|=q^m-1$.

Let $C^*_3 = \langle b_3^{*\frac{q^m-1}{q-1}} \rangle$ be the
subgroup of order $q-1$ of $B^*_3$ and $B^{'*}_3=\{gC^*_3| g\in B^*_3 \}$ be the left coset of $C^*_3$ in $B^*_3$. Thus, $|B'^*_3| = \frac{q^m-1}{q-1}$. Furthermore, the cyclic sets $A^*_3$ and $B'^*_3$ are chosen so that $A^*_3 \cap B'^*_3= \{1\}$. Then from Lemma \ref{le:5}, it follows that $A^*_3$ and $B'^*_3$ have MLSs.

Now, let $S_3 = \{W_i\mid 0 \leq i \leq q^m\}$ be the classical spread as
described in Remark \ref{remark:Wi}, $W_i$ be $m$-dimensional
totally isotropic subspaces of $V$ for $0 \leq i \leq q^m$. Then, the partial spread $S_3$ partitions
the set of all isotropic points of $P(V)$.

Moreover, we observe that the group $A^*_3$ is sharply transitive on $S_3$ with respect to $W_3$.
Also, $B'^*_3$ is sharply transitive on $P(W_3)$ with respect to $\langle e_1\rangle$, where $e_1\in W_3$. Hence, we have the following lemma.

\begin{lemma} \label{le:9}
Let $A_3$, $B_3'\subseteq O_{2m+1}(q)$ (resp. $A^*_3$, $B'^*_3\subseteq SO_{2m+1}(q)$), $S_3$ be the partial spread, $W_3$ be the subspace of $V$, and $w = \langle e_1\rangle$. Then,

\begin{enumerate}
  \item [(i)] $A_3$ (resp. $A^*_3$) is a sharply transitive set on $S_3$ with respect
to $W_3$.
  \item [(ii)] $B_3'$ (resp. $B'^*_3$) is a sharply transitive set on $P(W_3)$ with
respect to $w_3$.
\end{enumerate}

\end{lemma}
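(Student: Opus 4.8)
The plan is to verify the two sharp-transitivity assertions directly from the explicit descriptions of $A_3$, $B_3'$ and the spread $S_3$, in complete analogy with the reasoning that precedes Lemma \ref{le:7} and Lemma \ref{le:8}. Since the $O_{2m+1}(q)$ and $SO_{2m+1}(q)$ statements differ only in that the generating matrices $a_3^*$, $b_3^*$ are taken inside the isometry group of the relevant form, I would carry out the computation in full for $O_{2m+1}(q)$ and then observe that it applies verbatim to the starred objects.

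For part (i), I would invoke the description of the spread from Remark \ref{remark:Wi}, writing $W_i = W_3\,\alpha^{(q^m-1)i}$ for $0 \le i \le q^m$. By the block shape of $x_3$, the generator $a_3 = x_3^{q^m-1}$ acts as multiplication by $\alpha^{q^m-1}$ on the $2m$-dimensional part of $V$ that carries the spread while fixing the complementary line, so $a_3$ sends $W_i$ to $W_{i+1}$ with indices taken modulo $q^m+1$. Now $\alpha^{q^m-1}$ has multiplicative order $(q^{2m}-1)/(q^m-1) = q^m+1$, whence $|A_3| = q^m+1 = |S_3|$; a cyclic group whose order equals the number of members it permutes transitively acts regularly, so $A_3$ is sharply transitive on $S_3$ with respect to $W_3$.

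For part (ii), I would identify $W_3 \cong F_{q^m}$ and use that $b_3$ restricts on $W_3$ to the Singer generator $D_3$ of $GL(W_3)\cong GL_m(q)$. Hence $B_3 = \langle b_3\rangle$ acts on $W_3$ as a cyclic group of order $q^m-1$ that is regular on the $q^m-1$ nonzero vectors, and therefore transitive on the $(q^m-1)/(q-1)$ points of $P(W_3)$. The stabilizer in $B_3$ of any projective point consists exactly of those elements acting as scalars $\lambda I$ with $\lambda\in F_q^{\ast}$; this is the subgroup $C_3 = \langle b_3^{(q^m-1)/(q-1)}\rangle$ of order $q-1$, which is also the kernel of the action on $P(W_3)$. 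Consequently the left cosets $B_3' = \{gC_3 \mid g\in B_3\}$ are in bijection with $P(W_3)$, and as coset representatives they form a sharply transitive set on $P(W_3)$ with respect to $w_3 = \langle e_1\rangle$. Applying both computations to $a_3^*$ and $B_3'^*$ yields the $SO_{2m+1}(q)$ claims.

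The main obstacle is not the transitivity counting, which is routine once the order of $\alpha^{q^m-1}$ and the Singer structure are fixed, but the bookkeeping that secures \emph{sharpness}. Specifically, one must confirm that $a_3$ stabilizes the spread setwise and shifts the index bijectively, so that no nontrivial power of $a_3$ fixes $W_3$, and that the scalar subgroup annihilating the $P(W_3)$-action is exactly $C_3$ rather than a proper sub- or supergroup. Establishing that $D_3$ restricts to the \emph{full} Singer cycle is the delicate point, since that is what forces $|B_3'| = (q^m-1)/(q-1)$ to match the point count of $P(W_3)$ and hence upgrades transitivity to sharp transitivity.
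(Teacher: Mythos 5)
Your argument for the unstarred case is correct, and it supplies precisely the justification the paper never writes down: the paper ``proves'' Lemma \ref{le:9} only by asserting, in the paragraphs preceding it, that $A_3$ is sharply transitive on $S_3$ and $B_3'$ on $P(W_3)$ (with citations to \cite{NN11,BPS09}), and then recording those assertions as a lemma. Your two computations --- $a_3$ acting as multiplication by $\alpha^{q^m-1}$ on the $2m$-dimensional part carrying the spread, so that $W_i\mapsto W_{i+1}$ and $|A_3|=q^m+1=|S_3|$ forces regularity; and $b_3$ restricting on $W_3\cong F_{q^m}$ to the Singer generator $D_3$, whose point stabilizer on $P(W_3)$ is exactly the scalar subgroup $C_3$, so that the transversal $B_3'$ is sharply transitive --- are the standard spread/Singer-cycle facts the paper is implicitly invoking, and they are carried out correctly.

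The genuine gap is your closing claim that the computation ``applies verbatim to the starred objects.'' It does not, because the paper does not define the starred objects analogously: for $SO_{2m+1}(q)$ it takes $D^*_3$ to be an \emph{arbitrary} element of $O_m(q)\leq GL(W_3)$, not a Singer generator, and sets $b^*_3=\mathrm{diag}(D^*_3,\,D_3^{*t},\,1)$ rather than $\mathrm{diag}(D_3,\,(D_3^t)^{-1},\,1)$. With those definitions your part (ii) argument collapses: nothing forces $\langle b^*_3\rangle$ to have order $q^m-1$ or to act transitively on the nonzero vectors of $W_3$ (the paper's own assertion $|B^*_3|=q^m-1$ is unsupported). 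Indeed the definition permits $D^*_3=I_m$, in which case $B'^*_3$ is a single element and cannot be sharply transitive on $P(W_3)$ for $m\geq 2$, so the starred half of Lemma \ref{le:9}(ii) is not merely unproved but false as the objects are literally defined. This defect originates in the paper, not in your reasoning; but you yourself identified ``$D_3$ restricts to the full Singer cycle'' as the delicate point, and a complete proof had either to verify that point for $b^*_3$ or to flag that the paper's starred construction does not provide it. Declaring the starred case verbatim conceals exactly the step you called delicate.
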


\begin{theorem}\label{th:7} Let $q$ be a power of odd primes. Then, the orthogonal group $O_{2m+1}(q)$ has an MLS.

\end{theorem}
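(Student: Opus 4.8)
The plan is to reproduce, almost verbatim, the inductive scheme that already proved Theorems \ref{th:3} and \ref{th:4}, now with the odd-rank reduction $O_{2m+1}(q)\to O_{2m-1}(q)$. All of the analytic work has in fact been front-loaded into the paragraphs preceding Lemma \ref{le:9}: the bilinear form $f$, the count $(q^m-1)(q^m+1)/(q-1)$ of isotropic points, the cyclic subgroup $A_3=\langle a_3\rangle$ of order $q^m+1$, the coset set $B_3'$ of cardinality $(q^m-1)/(q-1)$, and the spread $S_3$ are all in place, and Lemma \ref{le:9} packages exactly the two sharp-transitivity hypotheses that Lemma \ref{le:3} requires. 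So the theorem is really a matter of invoking the machinery in the correct order.

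First I would dispose of the base case. Taking $m=0$, the group $O_1(q)$ has order $2$, hence is solvable and has an MLS by Lemma \ref{le:4}; this anchors the induction on $m$. For the inductive step with $m\geq 1$, I set $A=A_3$, $B=B_3'$, $w_3=\langle e_1\rangle$, and let $L$ be the set of all isotropic points of $P(V)$. Since $G=O_{2m+1}(q)$ acts transitively on $L$ and $S_3$ partitions $L$, Lemma \ref{le:9} verifies conditions (i) and (ii) of Lemma \ref{le:3}, so $[A_3,B_3',G_{w_3}]$ is an LS for $G$. It then remains only to exhibit an MLS for each of the three blocks and to apply Lemma \ref{le:6}.

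For the blocks $A_3$ and $B_3'$, Lemma \ref{le:5} applies directly since they are cyclic sets, yielding MLSs. For the stabilizer block I would invoke the parabolic structure of Section 3 specialized to an isotropic $1$-space (that is, $k=1$): the point stabilizer $G_{w_3}$ is a semidirect product of a $p$-group of order $q^{2m-1}$ with $GL_1(q)\times O_{2m-1}(q)$. By Lemma \ref{le:4} the $p$-group and $GL_1(q)$, being solvable, have MLSs, and the induction hypothesis supplies an MLS for $O_{2m-1}(q)$; assembling these through the direct-product/semidirect-product LS of Lemma \ref{le:1} together with Lemma \ref{le:6} gives an MLS for $G_{w_3}$. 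With all three blocks equipped with MLSs, Lemma \ref{le:6} finally produces an MLS for $G$, closing the induction.

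The only place requiring genuine care — and the step I expect to be the crux — is confirming that the point stabilizer really has the stated shape $q^{2m-1}:(GL_1(q)\times O_{2m-1}(q))$, since this is precisely what makes the rank drop by two and lets the induction on the \emph{odd} orthogonal groups close on itself. This is exactly the $k=1$ instance of the maximal parabolic description recorded in Section 3 for $O_{2m+1}(q)$, so no new geometric input is needed; everything else is a routine reapplication of Lemmas \ref{le:3}, \ref{le:5}, and \ref{le:6} in the pattern of Theorem \ref{th:3}.
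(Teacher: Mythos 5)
Your proposal is correct and follows essentially the same route as the paper: the same base case $O_1(q)\cong C_2$ via Lemma \ref{le:4}, the same LS $[A_3,B_3',G_{w_3}]$ obtained from Lemmas \ref{le:3} and \ref{le:9}, the same $k=1$ parabolic shape $q^{2m-1}:(GL_1(q)\times O_{2m-1}(q))$ for the stabilizer, and the same assembly via Lemmas \ref{le:4}, \ref{le:5}, and \ref{le:6} with induction on $m$. If anything, your write-up is slightly more careful than the paper's (you cite Lemma \ref{le:3} where the paper's proof cites Lemma \ref{le:5}, apparently a typo, and you make explicit the Lemma \ref{le:1}/Section 3 justification of the stabilizer's structure that the paper merely asserts).
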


\begin{proof}
Let $G =O_{2m+1}(q)$, $A=A_3$, $B=B_3'$, $w = \langle e_1\rangle$. When $m=0$, $O_1(q)\cong C_2$. Lemma \ref{le:4} implies that $O_1(q)$ has an MLS. When $m\geq 1$, from Lemma \ref{le:5} and Lemma \ref{le:9}, $[A_3, B_3',
G_w]$ is an LS for $G$. Hence, the stabilizer $G_w$ is a semi-direct product of a $p$-group of
order $q^{2m-1}$ and $GL_1(q)\times O_{2m-1}(q)$. Now from Lemma \ref{le:4}, $p$-groups and $GL_1(q)$
have MLSs. Furthermore, by the induction hypothesis, we assume
that $O_{2m-1}(q)$ has an MLS, therefore, $O_{2m+1}(q)$ also has an
MLS. Thus, $G_w$ has an MLS. Also, from Lemma \ref{le:5}, the cyclic sets $A_3$ and $B_3'$ have
MLSs. Therefore, using Lemma \ref{le:6}, $G$ has an MLS.
\end{proof}

\begin{theorem} \label{th:8} Let $q$ be a power of odd primes. Then, $SO_{2m+1}(q)$  has an MLS.

\end{theorem}

\begin{proof}
Let $G^* =SO_{2m+1}(q)$, $A=A^*_3$, $B=B'^*_3$, $w = \langle e_1\rangle$. In case, when $m=0$, $SO_1(q)\cong C_2$.  Lemma \ref{le:4} implies that $SO_1(q)$ has an MLS. In case, when $m\geq 1$, from Lemma \ref{le:5} and Lemma \ref{le:9}, $[A^*_3, B'^*_3, G_w]$ is an LS for $G$. Then the stabilizer $G^*_w$ is a semi-direct product of a $p$-group of
order $q^{2m-1}$ and $GL_1(q)\times SO_{2m-1}(q)$. Now from Lemma \ref{le:4}, $p$-groups and $GL_1(q)$ have MLSs. Furthermore, by the induction hypothesis, we assume
that $SO_{2m-1}(q)$ has an MLS, therefore, $G^*_w$ has an MLS. Also, from Lemma \ref{le:5}, the cyclic sets $A^*_3$ and $B'^*_3$ have MLSs. Therefore, using Lemma \ref{le:6},  $SO_{2m+1}(q)$ has an MLS.

\end{proof}

\begin{theorem} \label{th:6.3} Let $q$ be a power of odd primes. Then, $PSO_{2m+1}(q)$ has an MLS.

\end{theorem}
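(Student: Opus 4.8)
The plan is to exploit the fact that in odd dimension the passage to the projective group collapses, so that $PSO_{2m+1}(q)$ is isomorphic to $SO_{2m+1}(q)$ itself; the result then follows immediately from Theorem \ref{th:8}. The whole argument therefore reduces to determining the center $Z(SO_{2m+1}(q))$.

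First I would compute this center. For $n=2m+1\geq 3$ the natural module is absolutely irreducible, so any central element of $SO_{2m+1}(q)$ is a scalar $\lambda I_{2m+1}$; being an isometry of $(V,Q)$ it must satisfy $\lambda^2=1$, i.e.\ $\lambda=\pm 1$ for odd $q$. The key point—the one feature that separates the odd-dimensional case from the even-dimensional treatment of Theorem \ref{th:5}—is the determinant parity: $\det(-I_{2m+1})=(-1)^{2m+1}=-1$, so $-I_{2m+1}$ lies in $O_{2m+1}(q)\setminus SO_{2m+1}(q)$. Hence the only central scalar in $SO_{2m+1}(q)$ is $I_{2m+1}$, and $Z(SO_{2m+1}(q))=\{I_{2m+1}\}$.

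Consequently the canonical homomorphism $\eta:SO_{2m+1}(q)\to PSO_{2m+1}(q)=SO_{2m+1}(q)/Z(SO_{2m+1}(q))$ has trivial kernel and is an isomorphism, giving $PSO_{2m+1}(q)\cong SO_{2m+1}(q)$. Since the defining length condition of Definition 2 depends only on the group order and the block structure, the property of possessing an MLS is preserved under group isomorphism; so the MLS for $SO_{2m+1}(q)$ furnished by Theorem \ref{th:8} transports to $PSO_{2m+1}(q)$.

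I do not expect a genuine obstacle here; the only step requiring care is the determinant computation showing $-I_{2m+1}\notin SO_{2m+1}(q)$, which is exactly what makes the center trivial (in contrast to the even case, where $-I_{2m}\in SO_{2m}(q)$ and the center is nontrivial). Should one prefer to keep the exposition parallel to Theorem \ref{th:5}, an alternative is to transport the LS $[A^*_3, B'^*_3, G^*_w]$ of Theorem \ref{th:8} through $\eta$ using Lemma \ref{le:2} and rerun the induction on $m$, noting that the image of the stabilizer is a semi-direct product of a $p$-group of order $q^{2m-1}$ with $GL_1(q)\times PSO_{2m-1}(q)$; but the isomorphism argument renders this redundant.
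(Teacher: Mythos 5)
Your proof is correct, but it takes a genuinely different route from the paper. The paper never computes the center: it transports the LS $[A^*_3, B'^*_3, G^*_w]$ coming from Theorem~\ref{th:8} through the canonical homomorphism $\eta_3 : SO_{2m+1}(q)\to PSO_{2m+1}(q)$ via Lemma~\ref{le:2}, notes that the image of the stabilizer is a semi-direct product of a $p$-group of order $q^{2m-1}$ with $GL_1(q)\times PSO_{2m-1}(q)$, and reruns the induction of Theorem~\ref{th:2} --- i.e.\ exactly the ``alternative'' you sketch and then discard in your final paragraph, kept deliberately parallel to the even-dimensional Theorem~\ref{th:5}. Your route instead observes that in odd dimension the quotient is trivial: a central element must be a scalar $\lambda I_{2m+1}$ with $\lambda^2=1$, and $\det(-I_{2m+1})=-1$ excludes $-I_{2m+1}$ from $SO_{2m+1}(q)$, so $Z(SO_{2m+1}(q))=\{I_{2m+1}\}$, $PSO_{2m+1}(q)\cong SO_{2m+1}(q)$, and Theorem~\ref{th:8} finishes the job since possession of an MLS is an isomorphism invariant. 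Your approach is shorter, and it exposes a structural fact the paper glosses over: the canonical homomorphism used in the paper's proof has trivial kernel, so the appeal to Lemma~\ref{le:2} and the renewed induction, while valid, are redundant machinery; it also isolates cleanly why odd dimension differs from even ($-I_{2m}\in SO_{2m}(q)$ makes the center nontrivial there). What the paper's version buys is uniformity of exposition with Theorem~\ref{th:5} and independence from any center computation. Two minor points of care in your write-up: the Schur-lemma step needs $2m+1\geq 3$ (the case $m=0$ is trivial anyway, as $SO_1(q)=\{I\}$), and if one wants to avoid absolute irreducibility altogether, the decomposition $O_{2m+1}(q)=SO_{2m+1}(q)\times\{\pm I_{2m+1}\}$ gives $Z(SO_{2m+1}(q))\leq Z(O_{2m+1}(q))\cap SO_{2m+1}(q)=\{I_{2m+1}\}$ directly, since any element central in $SO_{2m+1}(q)$ also commutes with the scalar $-I_{2m+1}$ and hence with all of $O_{2m+1}(q)$.
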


\begin{proof}
In case, when $G^*=SO_{2m+1}(q)$ and $G'^*= PSO_{2m+1}(q)$, let $A = A^*_3$, $B = B'^*_3$, $w = \langle e_1\rangle$, $L$ be the set of all isotropic points of $P(V)$ as described above. Suppose $\eta_3:SO_{2m+1}(q) \rightarrow PSO_{2m+1}(q)\cong SO_{2m+1}(q)/Z(SO_{2m+1}(q))$ is the canonical homomorphism onto $PSO_{2m+1}(q)$, and let $\overline{A}^*_3=\eta(A^*_3)$, $\overline{B}^*_3=\eta(B^*_3)$ and $\overline{G_w}=\eta(G_w)$, then [$\overline{A}^*_3$, $\overline{B}^*_3$, $\overline{G_w}$] is the corresponding LS for $PSO_{2m+1}(q)$ from Lemma \ref{le:2}. Also from Section 3, the stabilizer $\overline{G_w}$ is a semi-direct product of a $p$-group of order $q^{2m-1}$ and $GL_1(q)\times  PSO_{2m-1}(q)$.
Thus, using the same induction as used in Theorem \ref{th:2}, we get that $PSO_{2m+1}(q)$ has an MLS.

\end{proof}

\begin{theorem} \label{th:6.3} Let $q$ be a power of odd primes. Then, $P\Omega_{2m+1}(q)$ has an MLS.

\end{theorem}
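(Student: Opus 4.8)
The plan is to follow the same two-stage pattern used for $P\Omega^{\pm}_{2m}(q)$ in Theorem \ref{th:6}: first establish that the commutator subgroup $\Omega_{2m+1}(q)$ itself admits an MLS, and then descend to the quotient $P\Omega_{2m+1}(q)$ through the canonical homomorphism together with Lemma \ref{le:2}. Since the dimension $2m+1$ is odd and $q$ is odd, $-I$ has determinant $(-1)^{2m+1}=-1$, so the only scalar isometry of determinant $1$ is the identity; hence $Z(\Omega_{2m+1}(q))$ is trivial and $P\Omega_{2m+1}(q)\cong\Omega_{2m+1}(q)$. The descent is therefore essentially formal, but I would keep it explicit so that the inductive hypothesis is phrased uniformly for $P\Omega_{2m-1}(q)$.

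For the first stage I would rerun the construction of Theorem \ref{th:8} with $G$ replaced by $\Omega_{2m+1}(q)$. Note that $\Omega_{2m+1}(q)$ is still transitive on the isotropic points $L$ and on the members of the $m$-spread $S_3$, so the hypotheses of Lemma \ref{le:3} remain available provided one can exhibit sharply transitive sets $A,B\subseteq\Omega_{2m+1}(q)$ with $A$ sharply transitive on $S_3$ relative to $W_3$ and $B$ sharply transitive on $L\cap P(W_3)$ relative to $w=\langle e_1\rangle$, in the spirit of Lemma \ref{le:9}. Assuming such $A,B$ are in hand, the stabilizer $\Omega_w$ of $w$ in $\Omega_{2m+1}(q)$ is a semi-direct product of a $p$-group of order $q^{2m-1}$ with $GL_1(q)\times\Omega_{2m-1}(q)$, exactly as in the $SO$ case but with the bottom factor cut down to its commutator subgroup. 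Lemma \ref{le:3} then yields an LS $[A,B,\Omega_w]$ for $\Omega_{2m+1}(q)$; the $p$-group and $GL_1(q)$ have MLSs by Lemma \ref{le:4}, the sets $A,B$ have MLSs by Lemma \ref{le:5}, and $\Omega_{2m-1}(q)$ has an MLS by the induction hypothesis (with base case $\Omega_1(q)=\{1\}$, and $\Omega_3(q)\cong PSL_2(q)$ handled through the already-known MLS for $SL_2(q)$ and Lemma \ref{le:2}). Lemma \ref{le:6} then delivers an MLS for $\Omega_{2m+1}(q)$.

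For the second stage, let $\theta:\Omega_{2m+1}(q)\to P\Omega_{2m+1}(q)$ be the canonical homomorphism. By Lemma \ref{le:2}, $[\theta(A),\theta(B),\theta(\Omega_w)]$ is an LS for $P\Omega_{2m+1}(q)$, and the image stabilizer $\theta(\Omega_w)$ is again a $p$-group of order $q^{2m-1}$ extended by $GL_1(q)\times P\Omega_{2m-1}(q)$. Invoking the induction hypothesis that $P\Omega_{2m-1}(q)$ has an MLS, together with Lemmas \ref{le:4}, \ref{le:5} and \ref{le:6}, gives an MLS for $P\Omega_{2m+1}(q)$, completing the induction.

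The hard part will be the very first ingredient of the first stage: arranging that the sharply transitive sets $A$ and $B$ actually lie inside the commutator subgroup rather than merely inside $SO_{2m+1}(q)$. This is the spinor-norm (equivalently, rank-parity) condition exploited in Theorem \ref{th:6}(2), namely that $x\in SO_{2m+1}(q)$ lies in $\Omega_{2m+1}(q)$ precisely when $I_{2m+1}+x$ has even rank. The Singer-type generator $a^*_3$ of order $q^m+1$ and the coset set $B'^*_3$ that work for $SO_{2m+1}(q)$ do \emph{not} automatically satisfy this parity condition --- a direct check suggests $I_{2m+1}+a^*_3$ has odd rank --- so I expect to have to correct their spinor norms, for instance by twisting the offending elements by the reflection $r_w\in\Omega_w$ or by replacing cyclic generators with suitable products, while simultaneously preserving both sharp transitivity on $S_3$ and $P(W_3)$ and the cyclic-set structure required to apply Lemma \ref{le:5}. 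Verifying that such a correction can always be made, uniformly in $q$ and $m$, is the technical heart of the argument and the step most likely to require a case analysis on the residue of $q^m$ modulo $4$, paralleling the two cases of Theorem \ref{th:6}.
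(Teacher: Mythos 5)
Your two-stage skeleton coincides with the paper's, and your opening observation is correct and in fact cleaner than what the paper does: in odd dimension $-I$ has determinant $-1$, so $\Omega_{2m+1}(q)$ contains no nontrivial scalars and $P\Omega_{2m+1}(q)\cong\Omega_{2m+1}(q)$, making the descent step essentially vacuous. Where you and the paper part ways is the first stage. The paper never reruns the spread construction inside the commutator subgroup at all: it disposes of $\Omega_{2m+1}(q)$ by quoting from \cite{W09} an isomorphism $\Omega_{2m-1}(q)\cong Sp_{2m-2}(q)$, importing the known MLS for the symplectic group from \cite{NN11}, and feeding that into the stabilizer induction, so the spinor-norm question you wrestle with never surfaces there. (That quoted isomorphism is itself shaky --- for odd $q$ the group $Sp_{2m-2}(q)$ has center $\{\pm I\}$ while $\Omega_{2m-1}(q)$ is centerless, and even the projective version $\Omega_{2m-1}(q)\cong PSp_{2m-2}(q)$ holds only in low rank --- but it is the paper's route, and the paper explicitly omits the remaining details.)

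Your route, by contrast, stands or falls on producing sharply transitive sets $A,B$ that actually lie in $\Omega_{2m+1}(q)$, and this is precisely what you do not do: the clause ``assuming such $A,B$ are in hand'' carries the entire first stage, and your closing paragraph concedes that the even-rank/spinor-norm correction remains to be verified. That is a genuine gap, not a deferred detail; without it Lemma \ref{le:3} cannot be invoked for $\Omega_{2m+1}(q)$ and nothing downstream goes through. Worse, the one concrete repair you float --- twisting the offending elements by ``the reflection $r_w\in\Omega_w$'' --- cannot work as stated: a reflection has determinant $-1$, hence lies outside $SO_{2m+1}(q)$ and a fortiori outside $\Omega_{2m+1}(q)$, and for an isotropic $w$ the defining formula $r_w(v)=v-2f(v,w)w/f(w,w)$ makes no sense at all, since $f(w,w)=0$ (an error you have inherited from the paper's own Theorem \ref{th:6}). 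A real correction would have to multiply by elements of $SO_{2m+1}(q)$ of nontrivial spinor norm --- for instance products of two reflections in non-singular vectors of suitable norms --- while preserving both the sharp transitivity of Lemma \ref{le:9} and the cyclic-set structure needed for Lemma \ref{le:5}; supplying that argument, uniformly in $q$ and $m$, is exactly the missing step, and neither your proposal nor, frankly, the paper's omitted proof provides it.
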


\begin{proof}
In case, when $n=2m+1$,  from \cite{W09}, we observe that  $\Omega_{2m-1}(q) \cong Sp_{2m-2}(q) $.  Also from \cite{NN11}, we have that  $Sp_{2m-2}(q)$ has an MLS.  Using the induction, we get that  $\Omega_{2m+1}(q) $ has an MLS. Then, we utilize the canonical homomorphism $\theta_3: \Omega_{2m+1} (q) \rightarrow P\Omega_{2m+1}(q)$ for proving that $P\Omega_{2m+1}(q)$ has LS . Consequently, using the same induction as used in Section 5, we get that $P\Omega_{2m+1}(q)$ has an MLS. Here, we omit the corresponding  proof.
\end{proof}

\section*{Conclusion}
We utilize partial spreads of totally isotropic subspaces, stabilizers of isotopic 1-subspaces and linear transformations in corresponding vector spaces to construct MLSs for $O_n(q)$, $SO_n(q)$, $PO_n(q)$, $PSO_n(q)$, $\Omega_n(q)$ and $P\Omega_n(q)$ with $q$ as a power of odd primes.  Meanwhile, our methods can be used to construct MLSs for other finite simple groups.

\section*{Acknowledgements}
This work is partially supported by the National Natural Science Foundation of China (NSFC) (Nos.61103198, 61121061,61370194) and the NSFC A3 Foresight Program (No.61161140320)

\end{document}